\pdfoutput=1
\documentclass[11pt]{article}
\def\withcolors{0}
\def\withnotes{0}

\usepackage[T1]{fontenc}
\usepackage[utf8]{inputenc}

\usepackage{lmodern}
\usepackage{xspace}                                     
\usepackage[protrusion=true,expansion=true]{microtype}  

\usepackage[normalem]{ulem}

\usepackage{amsfonts,amsmath,amssymb, amsthm, mathtools}
\usepackage{thm-restate}
\usepackage{dsfont} 

\usepackage{algorithmicx,algpseudocode,algorithm}

\usepackage[usenames,dvipsnames,table]{xcolor}

\usepackage{csquotes}

\usepackage{relsize}


\usepackage{multirow}
\usepackage{chngpage} 

\usepackage{mfirstuc}

\usepackage{tikz}
\usetikzlibrary{arrows}
\usetikzlibrary{calc,decorations.pathmorphing,patterns}


\usepackage[backref,colorlinks,citecolor=blue,bookmarks=true,linktocpage]{hyperref}
\usepackage{aliascnt}
\usepackage[numbered]{bookmark}

\usepackage{accents}

\usepackage{fullpage}

\usepackage{titling}

\usepackage[shortlabels]{enumitem}
  \setitemize{noitemsep,topsep=3pt,parsep=2pt,partopsep=2pt} 
  \setenumerate{itemsep=1pt,topsep=2pt,parsep=2pt,partopsep=2pt}
  \setdescription{itemsep=1pt}
  
\ifnum\withnotes=1
  \usepackage[colorinlistoftodos,textsize=scriptsize]{todonotes}
\fi

\usepackage{verbatim}

\usepackage{mleftright} 
\makeatletter
\@ifundefined{theorem}{%
  \theoremstyle{plain} 
  	\newtheorem{theorem}{Theorem}[section]
  	\newaliascnt{coro}{theorem}
  	  \newtheorem{corollary}[coro]{Corollary}
  	\aliascntresetthe{coro}
  	\newaliascnt{lem}{theorem}
  		
  	\aliascntresetthe{lem}
  	\newaliascnt{clm}{theorem}
  		\newtheorem{claim}[clm]{Claim}
	\aliascntresetthe{clm}
	\newaliascnt{fact}{theorem}
 	 	\newtheorem{fact}[theorem]{Fact}
	\aliascntresetthe{fact}
  	
  		\newaliascnt{ques}{theorem}
  	\newtheorem{question}[ques]{Question}
  \newaliascnt{prop}{theorem}
  	
	\aliascntresetthe{prop}
	\newaliascnt{conj}{theorem}
  		
	\aliascntresetthe{conj}
  \theoremstyle{remark} 

  \theoremstyle{definition} 
  	\newaliascnt{defn}{theorem}
 		 \newtheorem{definition}[defn]{Definition}
 	 \aliascntresetthe{defn}
}{}
\makeatother

\newenvironment{proofof}[1]{\begin{proof}[Proof of {#1}]}{\end{proof}}

\providecommand{\email}[1]{\href{mailto:#1}{\nolinkurl{#1}\xspace}}

\ifnum\withcolors=1
  \newcommand{\acolor}[1]{{\color{orange}#1}} 
  \newcommand{\ccolor}[1]{{\color{black!40!cyan}#1}} 
  \newcommand{\ecolor}[1]{{\color{black!50!green}#1}} 
  \newcommand{\kcolor}[1]{{\color{Plum}#1}} 
  \newcommand{\scolor}[1]{{\color{RubineRed}#1}} 
\else

  \newcommand{\acolor}[1]{{#1}}
  \newcommand{\ccolor}[1]{{#1}}
  \newcommand{\ecolor}[1]{{#1}}
  \newcommand{\kcolor}[1]{{#1}}
  \newcommand{\scolor}[1]{{#1}}
\fi

\ifnum\withnotes=1
  \newcommand{\anote}[1]{\par\acolor{\textbf{A: }\sf #1}} 
  \newcommand{\cnote}[1]{\par\ccolor{\textbf{C: }\sf #1}} 
  \newcommand{\enote}[1]{\par\ecolor{\textbf{E: }\sf #1}} 
  \newcommand{\knote}[1]{\par\kcolor{\textbf{K: }\sf #1}} 
  \newcommand{\snote}[1]{\par\scolor{\textbf{S: }\sf #1}} 

\else
  \newcommand{\anote}[1]{}
  \newcommand{\cnote}[1]{}
  \newcommand{\enote}[1]{}
  \newcommand{\knote}[1]{}
  \newcommand{\snote}[1]{}

\fi
\newcommand{\ignore}[1]{\leavevmode\unskip} 

\newcommand{\eps}{\ensuremath{\varepsilon}\xspace}
\newcommand{\Tester}{\ensuremath{\mathcal{T}}\xspace} 
\newcommand{\property}{\ensuremath{\mathcal{P}}\xspace} 
\newcommand{\eqdef}{\stackrel{\rm def}{=}}


\newcommand{\littleO}[1]{{o\mleft( #1 \mright)}}
\newcommand{\bigO}[1]{{O\mleft( #1 \mright)}}

\newcommand{\bigOmega}[1]{{\Omega\mleft( #1 \mright)}}

\newcommand{\tildeO}[1]{\tilde{O}\mleft( #1 \mright)}

\providecommand{\poly}{\operatorname*{poly}}


\newcommand{\setOfSuchThat}[2]{ \left\{\; #1 \;\colon\; #2\; \right\} } 			


\newcommand{\dist}[2]{\operatorname{dist}\!\left(#1, #2\right)}

\newcommand\restr[2]{{
  \left.\kern-\nulldelimiterspace 
  #1 
  \vphantom{\big|} 
  \right|_{#2} 
  }}

\newcommand{\proba}{\Pr}

\newcommand{\probaDistrOf}[2]{\proba_{#1}\left[\, #2\, \right]}


\newcommand{\expect}[1]{\mathbb{E}\!\left[#1\right]}



\newcommand{\abs}[1]{\left\lvert #1 \right\rvert}




\newcommand{\R}{\ensuremath{\mathbb{R}}\xspace}



\newcommand{\pdfsamp}{dual\xspace}
\newcommand{\cdfsamp}{cumulative dual\xspace}
\newcommand{\Pdfsamp}{\expandafter\capitalisewords\expandafter{\pdfsamp}}
\newcommand{\Cdfsamp}{\expandafter\capitalisewords\expandafter{\cdfsamp}}






\makeatletter

\newcommand{\Rom}[1]{\expandafter\@slowromancap\romannumeral #1@}

\makeatother



\newcommand{\bX}{\mathbf{X}}
\newcommand{\bx}{\mathbf{x}}

\newcommand{\by}{\mathbf{y}}
\newcommand{\bZ}{\mathbf{Z}}
\newcommand{\bz}{\mathbf{z}}

\newcommand{\bh}{\mathbf{h}}

\newcommand{\bS}{\mathbf{S}}

\def\cl#1{{\cal #1}} 


\makeatletter
\newcommand\ackname{Acknowledgements}
\if@titlepage
  \newenvironment{acknowledgements}{%
      \titlepage
      \null\vfil
      \@beginparpenalty\@lowpenalty
      \begin{center}%
        \bfseries \ackname
        \@endparpenalty\@M
      \end{center}}%
     {\par\vfil\null\endtitlepage}
\else
  
\fi
\makeatother
\def\authornameeg{Elena Grigorescu}
\def\authoraffieg{Purdue University. Email: \email{elena-g@purdue.edu}. Research supported in part by NSF CCF-1649515.}
\def\authornameak{Akash Kumar}
\def\authoraffiak{Purdue University. Email: \email{akumar@purdue.edu}. Research supported in part by NSF CCF-1649515 and NSF CCF-1618918.}
\def\authornamekw{Karl Wimmer}
\def\authoraffikw{Duquesne University. Email: \email{wimmerk@duq.edu}.}

\title{Flipping out with many flips: hardness of testing $k$-monotonicity}
\date{\today}
 
\author{
  \ecolor{\authornameeg}\thanks{\authoraffieg}
  \and \acolor{\authornameak}\thanks{\authoraffiak}
  \and \kcolor{\authornamekw}\thanks{\authoraffikw}
}

\begin{document}

\maketitle

\begin{abstract}

A function $f:\{0,1\}^n\rightarrow \{0,1\}$  is said to be $k$-monotone if it flips between $0$ and $1$ at most $k$ times on every ascending chain. Such functions represent a natural generalization of ($1$-)monotone functions, and have been recently studied in circuit complexity, PAC learning, and cryptography. Our work is part of a renewed focus in understanding testability of properties characterized by freeness of arbitrary order patterns as a generalization of monotonicity. Recently,  Canonne et al. (ITCS 2017) initiate the study of $k$-monotone functions in the area of property testing, and  Newman et al. (SODA 2017) study testability of families characterized by freeness from order patterns on real-valued functions over the line $[n]$ domain.  

We study $k$-monotone functions in the more relaxed {\em parametrized  property testing model}, introduced by Parnas et al. (JCSS, 72(6), 2006). In this process we resolve a problem left open in previous work.  Specifically, our results include the following.

\begin{enumerate}
\item Testing $2$-monotonicity  on the hypercube  non-adaptively with one-sided error requires an exponential in $\sqrt{n}$ number of queries. This behavior shows a stark contrast with testing (1-)monotonicity, which only needs $\tildeO{\sqrt{n}}$ queries (Khot et al. (FOCS 2015)).
Furthermore, even the apparently easier task of distinguishing $2$-monotone functions from functions that are far from being $n^{.01}$-monotone also requires an exponential number of queries.
\item On the hypercube $[n]^d$ domain, there exists a testing algorithm that makes a constant number of queries and distinguishes functions  that are $k$-monotone from functions that are far from being $O(kd^2) $-monotone.  Such a dependency is likely necessary, given the lower bound above for the hypercube.

\end{enumerate}

\end{abstract}

\ifnum\withnotes=1
\fi
\clearpage

\section{Introduction}

{\em Property testing} \cite{BlumLR93, RubinfeldS:96, GoldreichGR98} studies the complexity of deciding if a large object satisfies a property, or is far from satisfying the property, when the algorithm has only partial access to its input.
Two prolific lines of research in the study of Boolean functions have recently seen ultimate results: on one hand, the study of families exhibiting algebraic symmetries (such as low-degree polynomials, and triangle-freeness), explicitly initiated in \cite{KS08}, and on the other hand, the study of functions with less symmetry, but whose values respect a monotone order relation, initiated in \cite{GGLRS00}.
Extending the structural features of these classes of functions,  in this work we view families of Boolean functions with less symmetry as interpolating from basic families of monotone functions to families characterized by freeness of  more complex {\em order patterns}. This perspective is apparent in the recent works of   Newman et al \cite{NewmanRRS:17} and Canonne et al \cite{CGGKW:17} (a superset of the authors) who propose the study of families exhibiting freeness from more general order patterns in the property testing setting. Here we study the property of being {\em $k$-monotone}, building towards  understanding testability of freeness from arbitrary order patterns.

We first introduce some standard definitions. A \emph{property} of Boolean functions from a discrete domain $D$ to $\{0,1\}$ is a subset of  $\{f\colon D \rightarrow \{0,1\}\}$. Given two functions $f,g\colon D \to \{0,1\}$,  denote by $d(f,g)$  the (normalized) Hamming distance between them, i.e. $d(f,g) = \probaDistrOf{x\sim D}{ f(x) \neq g(x) }.$   The distance of a function $f$ to $\property$ is defined as $d(f, \property) = \min_{g\in\property} d(f,g).$  A \emph{$q$-query tester} for $\property$ is a randomized algorithm $\Tester$ that takes as input $\eps\in(0,1]$, and has query access to a function $f\colon D\to \{0,1\}$.  After making at most $q(\eps)$ queries, \Tester  distinguishes between the following two cases: i)  if $f \in \property$, then \Tester accepts, with probability $2/3$; ii)  if $d(f, \property) \geq \eps$, then \Tester  rejects  w.p. $2/3$. If the algorithm only errs in the second case but accepts any function $f\in\property$ with probability $1$, it is said to be \emph{one-sided}; otherwise, it is said to be \emph{two-sided}. Moreover, if the queries made to the function can only depend on the internal randomness of the algorithm, but not on the values obtained during previous queries, it is said to be \emph{non-adaptive}; otherwise, it is \emph{adaptive}. The maximum number of queries made to $f$ in the worst case is the \emph{query complexity} of the testing algorithm. When $d(f, \property) \geq \epsilon$, we say that $f$ is \emph{$\epsilon$-far} from $\property$. We will frequently use ``far'' to denote ``$\Omega(1)$-far''.

In this work we focus on the property of being $k$-monotone, which we formalize next.

 \begin{definition}[\bf{$k$-monotonicity}]
		 A function $f: D \rightarrow \{0,1\}$ is {\em $k$-monotone} if there do not exist $x_1\prec x_2\prec \ldots \prec x_{k+1}$ in $D$, such that $f(x_1)=1$ and $f(x_i)\ne f(x_{i+1})$ for all $i\in [k]$. Equivalently, a Boolean $k$-monotone function is free of the pattern $\langle f(x_1), f(x_2), \ldots, f(x_{k+1})\rangle=\langle 1, 0, 1, 0, \ldots, 0, 1\rangle$ \footnote{Note that $f(x_{k+1})$ could be $0$ depending on the parity of $k$}.
 Note that $1$-monotone functions are exactly the functions that are monotone.

 \end{definition}
 
 Testing monotonicity has drawn interest in the theoretical computer science community for almost two decades (e.g., \cite{GGLRS00, DGLRRS99, ErgunKKRV00, FLNRRS:02,  Fischer04, BatuRW05, AilonC06, HalevyK08, BhattacharyyaGJRW:09, BrietCGM10, FattalR:10,  BlaisBM12, CS:13a, CS:13b, CS:13, BermanRY:14, ChenST:14, ChenDST:15, KhotMS:15, BelovsB:15, ChenWX:17}), especially due to the naturalness of the property, as well as its evasiveness to tight analysis. 

  The notion of $k$-monotonicity has been studied ever since the 50's in the context of circuit lower bounds \cite{Markov:57}.
  Indeed, $k$-monotone functions correspond to functions computable by Boolean circuits with $\log k$ negation gates, and in particular monotone functions correspond to circuits with no negation gates. In proving lower bounds for circuits with few negation gates, it has been apparent that the presence of even one negation gate leads either to failure of the common analysis techniques,  or to failure of the expected results, as remarked by Jukna \cite{Jukna:12}.

Interest in $k$-monotonicity has been recently rekindled from multiple angles, including PAC learning, circuit complexity, cryptography and Fourier analysis \cite{Rossman:15, GuoMOR:15, GuoK:15, LinZ:16, DineshS:18}. 
In these areas, as $k$ increases, $k$-monotone functions are viewed as robust classes sitting between the very structured monotone functions, and general Boolean functions. 
The study of $k$-monotonicity in the property testing model was recently initiated in  \cite{CGGKW:17}. Here we continue this study and reveal a stark difference from testing monotonicity, while disproving a conjecture from \cite{CGGKW:17}. Our study leads to open questions that might be relevant to monotonicity testing.

\subsection{Testing $2$-monotonicity on the hypercube} \label{sec:2-mon-on-cube}

Recall that a function $f:\{0,1\}^n\rightarrow \{0,1\}$ is $2$-monotone if it is free from  $\langle f(x_1), f(x_2), f(x_3) \rangle= \langle 1, 0, 1\rangle$, for $x_1\prec x_2 \prec x_3$. How much more difficult can one-sided testing of this pattern be than testing freeness from $\langle f(x_1), f(x_2) \rangle= \langle 1, 0\rangle$ (i.e., monotonicity)?

The first proof of the $O(n)$-query one-sided non-adaptive tester for monotonicity from~\cite{GGLRS00} relies on the following structural theorem: if $f$ is far from monotone, then there are many \emph{edges} that contain a  violation to monotonicity (i.e., $x_1\prec x_2$, with $\langle f(x_1), f(x_2) \rangle= \langle 1, 0\rangle$, and there is no $x_3$ such that $x_1 \prec x_3 \prec x_2$).  For $k$-monotonicity with $k \geq 2$, there is no such result: since all violations require at least $3$ points, the violations can all be spread across many levels of the cube.  For example, consider a totally symmetric  \emph{block function} $f(x)$ such that $f(x) = 1$ if $|x|$ is between $n/2 - 2\sqrt{n}$ and $n/2 - \sqrt{n}$, or between $n/2 + \sqrt{n}$ and $n/2 + 2\sqrt{n}$, and $f(x) = 0$ otherwise.  This function is far from $2$-monotone, yet any triple of points that witnesses this fact will contain a pair of points whose Hamming distance is at least $\sqrt{n}$. 

This non-locality of a violation seems to be a reason why  testing $k$-monotonicity (non-adaptively, with one-sided error) turns out to be very different than just testing monotonicity.
 Initial lower bounds from \cite{CGGKW:17} only show a separation for $k\geq 3$. Indeed,  
 they ~\cite{CGGKW:17} show that testing $k$-monotonicity non-adaptively with one sided error requires $\Omega(n/k^2)^{k/4}$ queries, which for $k\geq 3$  beats the recent tester for monotonicity of $\tildeO{\sqrt n}$ query complexity \cite{CS:13b, KhotMS:15}. 
 However, no such separation was known for $k=2$. 
 Furthermore, they conjecture that $k$-monotonicity gets gradually more difficult to test as $k$ increases, and that $k$-monotonicity should be testable with one-sided error with a budget of $\Theta(n^{O(k)})$ queries. In this work we show that this is not the case, and in fact even testing $2$-monotonicity itself requires an exponential number of queries.

\begin{theorem} (Rephrasing Corollary \ref{cor:lb:na:1s:new})
Testing $2$-monotonicity non-adaptively with one-sided error requires $2^{\Omega(\sqrt{n})}$ queries. 
\end{theorem}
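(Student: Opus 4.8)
\emph{Proof strategy.} The plan is to argue by Yao's principle: it suffices to produce a single distribution $\distrD$ over functions $\B^n \to \{0,1\}$ such that (i) with probability at least $9/10$ a draw $f \sim \distrD$ is $\Omega(1)$-far from $2$-monotone, while (ii) for every fixed set $Q \subseteq \B^n$ with $|Q| = q = 2^{o(\sqrt n)}$, with probability at least $9/10$ the restriction $f|_Q$ contains no chain $x_1 \prec x_2 \prec x_3$ with $f(x_1) = 1$, $f(x_2) = 0$, $f(x_3) = 1$. Property (ii) is exactly what is needed against one-sided non-adaptive testers: any partial assignment with no such chain extends to a $2$-monotone function on $\B^n$ (take $g(x)=1$ iff there is a queried $1$ below $x$ and no queried $0$ below $x$ that itself has a queried $1$ strictly below it), so a one-sided tester — which must accept every input consistent with $2$-monotonicity — is forced to accept, and hence errs on the far function guaranteed by (i). Averaging over the tester's internal coins then yields the claimed $2^{\Omega(\sqrt n)}$ bound, and the chain tester, which queries a single maximal chain of $n+1 \ll 2^{\sqrt n}$ points, is simply the special case in which $Q$ is that chain.

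To build $\distrD$ I would first fix a \emph{frame}. Let $B = \setOfSuchThat{x \in \B^n}{|x| \in [n/2 - c\sqrt n,\, n/2 + c\sqrt n]}$ be the thin equatorial band for a suitable constant $c$, let $f_0$ be the (monotone, hence $2$-monotone) function equal to $0$ below $B$ and $1$ above $B$, and require that every $f \in \supp{\distrD}$ agree with $f_0$ outside $B$, so that only $f|_B$ is random. The point of the frame is the ``simple property of the violation graph'': any violating triple $x_1 \prec x_2 \prec x_3$ of such an $f$ must have $x_1, x_2 \in B$, since $f(x_1) = 1$ rules out $x_1$ below $B$, $f(x_2) = 0$ rules out $x_2$ above $B$, and $x_1 \prec x_2$ then forbids $x_1$ above $B$ or $x_2$ below $B$. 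Thus to witness a violation $Q$ must contain two comparable points \emph{inside the thin band}. Since $|B| = \Omega(2^n)$ by the central limit theorem there is room for $f|_B$ to be far from monotone, and one checks that under this frame ``$f$ is $2$-monotone'' is equivalent to ``$f|_B$ is monotone on the induced sub-poset,'' so that $\dist{f}{2\text{-monotone}} \geq \Omega(1) \cdot \dist{f|_B}{\text{monotone}}$ up to the constant cost of rewriting the frame itself.

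The heart of the argument is then a probabilistic estimate showing that, over the random choice of $f|_B$, any \emph{prescribed} comparable pair $x_1 \prec x_2$ inside $B$ is a $1 \to 0$ descent of $f$ (and, more generally, any prescribed short chain carries the full $1,0,1$ pattern) only with probability $2^{-\Omega(\sqrt n)}$. Two ingredients feed into this: the combinatorial fact that comparability is extremely sparse in the equatorial band — for levels $a < b$ within $O(\sqrt n)$ of $n/2$, a uniformly random pair at those levels is comparable only with tiny probability, forcing any would-be witnessing pair $(x_1,x_2)$ into a rigid configuration — and a design of $\distrD$ that places the responsibility for non-$2$-monotonicity on a \emph{pseudorandom} $\sqrt n$-scale structure inside $B$ (for instance, letting $f|_B$ be governed by $\sqrt n$ blocks of $\sqrt n$ coordinates, or by a random family of width-$\Theta(\sqrt n)$ DNF-like terms), so that no fixed chain or pair ``aligns'' with the hidden obstruction except on an event of measure $2^{-\Theta(\sqrt n)}$. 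Granting that each witnessing configuration realizable inside $Q$ has probability at most $2^{-\Omega(\sqrt n)}$ of actually being a violation, a union bound over the at most $q^3$ such configurations gives $\probaOf{f|_Q \text{ witnesses a violation}} \leq q^3 \cdot 2^{-\Omega(\sqrt n)} < 1/10$ for $q = 2^{o(\sqrt n)}$, which is (ii).

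The main obstacle — and the real content of the construction — is reconciling the two demands on $\distrD$: it must make \emph{every} individual pair and chain essentially never a violation, yet leave \emph{no} edit of fewer than $\Omega(2^n)$ coordinates that turns $f$ into a $2$-monotone function. This is possible only because the cube contains $2^{\Theta(n)}$ potential violating configurations, vastly more than the $2^{\omega(\sqrt n)}$ one can afford to ``spend,'' so a typical $f$ can carry violations that are individually exponentially rare in $\sqrt n$ yet collectively force a constant fraction of edits. I expect the far-from-$2$-monotone half to be proved via a matching/vertex-cover argument on the violation graph of $f|_B$ (a large matching of $1 \to 0$ descents forces $\dist{f|_B}{\text{monotone}} = \Omega(1)$), and the per-configuration half to follow from the sparsity-of-comparability estimate together with the pseudorandomness of the hidden $\sqrt n$-scale structure; engineering a single $\distrD$ for which both hold simultaneously is where essentially all the work lies.
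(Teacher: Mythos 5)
Your outer framework is sound: the Yao-style reduction, the observation that a one-sided non-adaptive tester must accept whenever the queried values extend to a $2$-monotone function (your explicit extension $g$ does work), and the union bound over the at most $q^3$ comparable triples inside $Q$ are all legitimate, and in fact give a slightly more direct route than the paper's, which instead reduces an arbitrary non-adaptive one-sided tester to a tester over random chains (losing a factor $q^{k+1}$) and then analyzes the basic chain tester against a single function randomized by a coordinate permutation. Both routes ultimately need the same per-configuration estimate: for every \emph{fixed} triple $x_1 \prec x_2 \prec x_3$, the probability over the hard distribution that it witnesses a violation is $2^{-\Omega(\sqrt n)}$.

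The fatal problem is your choice of frame. If $f$ is $0$ below the band $B$, $1$ above $B$, and arbitrary on $B$, then (as you correctly note) $f$ is $2$-monotone iff $f|_B$ is monotone on the induced poset, and $f$ being $\Omega(1)$-far from $2$-monotone forces $f$ to be $\Omega(1)$-far from \emph{monotone}. But then the structural theorem of~\cite{GGLRS00} applies: an $\eps$-far-from-monotone function has at least $\eps 2^{n-1}$ violated \emph{edges}, and under your frame all of these lie inside $B$. Since there are at most $n2^{n-1}$ edges, averaging over your distribution $\distrD$ (which is far from $2$-monotone with probability $\geq 9/10$) produces a single fixed edge $u \prec v$ with $\Pr_{f\sim\distrD}[f(u)=1,\ f(v)=0] = \Omega(1/n)$; appending $1^n$ (where $f=1$ by the frame) gives a fixed triple that is a violation with probability $\Omega(1/n)$, not $2^{-\Omega(\sqrt n)}$. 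So no choice of ``pseudorandom structure'' inside the band can rescue property (ii): the frame itself collapses $2$-monotonicity to monotonicity, which is poly$(n)$-testable. This is exactly the obstacle the paper flags in its overview, and its construction is built to evade it: the hard function $f_{n_L}$ is $0$ both below \emph{and above} a thin band in a set $L$ of most coordinates, and equals a balanced-blocks function on a hidden set $R$ of $\Theta(n)$ coordinates inside that band, so that \emph{every} violating triple must contain a pair at Hamming distance $\Omega(\sqrt{n_R}) = \Omega(\sqrt n)$ (no violated edges or short pairs exist at all). The randomness is a uniform permutation of $[n]$, and the $2^{-\Omega(\sqrt n)}$ bound per fixed triple comes from a hypergeometric tail: a fixed pair either differs in too few coordinates to cross a block of the balanced-blocks function on $R$, or differs in so many that its $L$-part almost surely exits the thin band. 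You would need to import this two-scale structure (long violations hidden in a random coordinate subset) to make your union bound go through.
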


For comparison,  \cite{CGGKW:17}  obtains a tester for $2$-monotonicity  with  $2^{O(\sqrt{n}\cdot \log n)}$ queries, and  \cite{BlaisCOST:15} obtains a $2^{\Omega( \sqrt{n})}$ lower bound for PAC learning. 
Hence, testing $2$-monotonicity is a natural property for which testing is essentially as hard as PAC learning.

While being contrary to the intuition that since monotonicity testing is easy, so should $2$-monotonicity be, this result  reinforces the theme discussed above in  proving circuit complexity lower bounds.  The class of $2$-monotone functions is exactly the class of functions computable by a Boolean circuit with at most one negation gate~\cite{Markov:57,BlaisCOST:15}.  As in the circuit complexity world, allowing one negation gate significantly increases complexity.\\

\noindent \textbf{A canonical test for one-sided testing}\\

\noindent As alluded to above, in order to test $k$-monotonicity with one-sided error, a tester should only reject if it finds a {\em violation} in the form of a sequence $x_1\prec x_2\prec \ldots \prec x_{k+1}$ in $\{0,1\}^n$, such that $f(x_1)=1$ and $f(x_i)\ne f(x_{i+1})$ for all $i\in [k]$. Hence, a canonical candidate tester suggested in \cite{CGGKW:17}  queries all points along a random chain  and rejects only if it finds a violation. 

\begin{definition}		We define the (basic) \emph{chain tester} to be the algorithm that picks a \emph{uniformly random chain} $\bZ=\langle 0^n \prec \bz_1 \prec \bz_2 \prec \cdots \bz_{n-1} \prec 1^n \rangle$ of comparable points from $\{0,1\}^n$, and queries $f$ at all these points.  The chain tester rejects if $\bZ$ reveals a violation to $k$-monotonicity, otherwise it accepts. We also sometimes denote by a chain tester an algorithm that picks multiple chains (possibly from a joint distribution).
\end{definition}
	
All previous tests for monotonicity (e.g., \cite{GGLRS00, CS:13a, ChenST:14, KhotMS:15}) imply a chain tester  incurring only a small polynomial blow-up in the query complexity.

As expected, a chain tester is indeed implied by any non-adaptive one-sided algorithm for $k$-monotonicity.

\begin{theorem} \label{thm:canon1}(Corollary to Theorem \ref{thm:lb:na:1s:k-vs-far-from-g(k,n)})	Any non-adaptive one-sided $q$-query tester for $k$-monotonicity over $\{0,1\}^n$ implies an $\bigO{q^{k+1}n}$-query tester that queries points on a distribution over chains, and 
succeeds with constant probability. In particular, if $p$ is the success probability of the basic chain tester, then $p=\Omega(1/q^{k+1})$.
\end{theorem}

To prove our lower bounds, we create families for which the chain tester fails. We first remark that the chain tester does very well on the `block functions'  described above in Section ~\ref{sec:2-mon-on-cube} that are far from being $2$-monotone. Indeed, every chain from $0^n$ to $1^n$ will uncover a violation to $2$-monotonicity!  
In our constructions, we get around this by hiding such functions with ``long'' violations on a {\em  small set of coordinates}, while still making sure it comprises a constant fraction of the cube.  We show that a random chain is unlikely to visit enough of these coordinates to find a violation.

Proving that these functions are far from $k$-monotone amounts to understanding the structure of the violation hypergraph (i.e., the hypergraph  whose vertices are elements of $\{0,1\}^n$, and whose edges are the tuples that witness a violation).  
The violation graph is to some extent the only handle we have on arguing structural properties of functions that are far from being $k$-monotone. A large matching (edges with disjoint sets of vertices) in this hypergraph implies that the function is far from $k$-monotone. Indeed, such families can be shown to have a large matching. 

In fact, our results work in a larger context of parametrized testing, and our lower bounds hold even for apparently much easier tasks, as we describe next.

\subsection{Parametrized monotonicity testing on the hypercube}

The notion of $k$-monotonicity allows us to propose the  natural problem of approximating the ``monotonicity'' of a function.  For a concrete example, suppose we are promised that the unknown function $f$ is either $2$-monotone or far from, say, $n^{0.01}$-monotone.  That is, either $f$ changes value at most twice on every chain, or a constant fraction of points of $\{0,1\}^n$ need to be changed so that $f$ changes value at most $n^{0.01}$ times on every chain.
This promise problem can only require fewer queries, and intuitively, it should be much fewer.

However, we show that intuition is incorrect: even the apparently much easier task of distinguishing between functions that are $2$-monotone and functions that are far from being $n^{.01}$-monotone requires an exponential number of queries.

\begin{theorem} (Rephrasing Corollary \ref{cor:lb:na:1s:2-vs-far-from-k})
Testing non-adaptively with one-sided error whether a function is  $2$-monotone or far from being $n^{0.01}$-monotone requires $2^{\Omega(n^{0.48})}$ queries.
\end{theorem}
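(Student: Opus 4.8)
\textbf{The plan is to} prove the lower bound by Yao's principle: exhibit one distribution $\dno$ over ``no'' inputs that defeats every cheap one‑sided non‑adaptive tester. The one‑sided hypothesis makes the reduction painless, and essentially all of the work — and the one real obstacle — lies in designing $\dno$. First, the only thing a one‑sided tester can do: since it must accept every $2$‑monotone function with probability $1$, on any execution it may output \reject\ only once the answers it has collected contain a \emph{$2$‑violation}, i.e.\ three queried points $x_1\prec x_2\prec x_3$ with $f(x_1)=1,\ f(x_2)=0,\ f(x_3)=1$ (equivalently, the queried restriction is not consistent with any $2$‑monotone function). So it suffices to build a distribution $\dno$ on $f\colon\{0,1\}^n\to\{0,1\}$ such that (i) every $f$ in the support is $\eps$‑far from $n^{0.01}$‑monotone for an absolute constant $\eps>0$, and (ii) for every fixed point set $Q$ with $\abs{Q}\le 2^{cn^{0.48}}$, $\probaDistrOf{f\sim\dno}{Q\text{ contains a }2\text{-violation of }f}\le 1/3$. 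Indeed a non‑adaptive $q$‑query tester with $q\le 2^{cn^{0.48}}$ fixes a distribution over such $Q$ in advance; averaging (ii) over that distribution and over $f\sim\dno$ and applying Markov produces some $f^\ast\in\supp{\dno}$ on which the tester rejects with probability $<2/3$, contradicting correctness since $f^\ast$ is $\eps$‑far. Note (ii) in particular handles the chain tester, whose query set is a maximal chain.

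For (i) I would use the simple property of the violation graph: distance to $n^{0.01}$‑monotonicity is lower bounded by disjoint‑witness mass. Fix a symmetric chain decomposition of $\{0,1\}^n$ and chop its chains into vertex‑disjoint pieces of length $\bigTheta{n^{0.01}}$; if a constant fraction of the pieces $C$ lying in the bulk have $f|_C$ alternating $\bigOmega{n^{0.01}}$ times, then turning $f$ into \emph{any} $n^{0.01}$‑monotone function costs $\bigOmega{2^n}$ coordinate changes, because on each such $C$ one must destroy all but $\le n^{0.01}$ of its alternations (a change removes at most two, so this needs $\bigOmega{n^{0.01}}$ changes on $C$) and the changes for distinct $C$'s are disjoint. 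Thus (i) reduces to the structural statement: with probability $1-o(1)$ over $f\sim\dno$, a constant fraction of the bulk pieces carry $\bigOmega{n^{0.01}}$ sign changes of $f$.

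Step three, the construction, is the crux, and it is where the two requirements pull against each other. Far‑ness forces $f$ to oscillate, on a constant fraction of the cube's short chains, many ($\bigOmega{n^{0.01}}$) times — which naively makes the oscillations visible to a random maximal chain or a handful of queries; while (ii) forces every local pattern to be exponentially rare under $\dno$ — which naively makes $f^{-1}(1)$ sparse or nearly an antichain, hence close to $2$‑monotone. Reconciling these is the point of the paper. The target is a random family in which (a) along a typical short chain of the decomposition $f$ genuinely climbs up and down $\bigOmega{n^{0.01}}$ times, on a constant fraction of such chains, yielding (i) via the previous paragraph; while (b) for any \emph{fixed} comparable triple $x_1\prec x_2\prec x_3$ the event that $f$ realizes $1,0,1$ on it has probability $2^{-\bigOmega{n^{0.48}}}$ over the family — i.e.\ the $2$‑violations are pseudorandomly placed, so that a fixed small query set cannot ``pin'' one. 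The right regime balances a gadget scale of order $n^{0.48}$ (governing the anti‑concentration that underlies (b)) against the need to interlace $\bigTheta{n^{0.01}}$‑fold alternation (for (a)); I expect establishing (b) \emph{simultaneously} with (a) to be the main difficulty, and it is precisely the cost of maintaining that many interlaced levels that degrades the exponent from $n^{0.5}$ to $n^{0.48}$.

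Granting Step three, requirement (ii) follows by a union bound: a fixed $Q$ with $\abs{Q}=q$ contains at most $q^3$ comparable triples, each a $2$‑violation of $f\sim\dno$ with probability $2^{-\bigOmega{n^{0.48}}}$, so taking $q=2^{cn^{0.48}}$ with $c>0$ small enough gives $q^3\cdot 2^{-\bigOmega{n^{0.48}}}\le 1/3$. Combined with (i) and the reduction this yields the claimed $2^{\bigOmega{n^{0.48}}}$ bound. Running the same scheme with the plainer ``$2$‑monotone versus far‑from‑$2$‑monotone'' family — for which Step two needs only $\bigOmega{1}$ alternations per disjoint short chain rather than $\bigOmega{n^{0.01}}$ — recovers the $2^{\bigOmega{\sqrt n}}$ bound of the preceding theorem.
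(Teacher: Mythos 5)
Your high-level architecture is the right one and matches the paper's in spirit: a one-sided tester can only reject upon exhibiting an explicit $2$-violation among its queried points, so it suffices to build a permutation-symmetric family that is far from $n^{0.01}$-monotone yet in which any fixed small query set pins down a violation with exponentially small probability, and then union-bound over the comparable triples in the query set (the paper packages this last step as a reduction to the chain tester, completing each comparable $(k+1)$-tuple to a uniformly random chain, which costs the $q^{k+1}$ factor you also incur). But the entire mathematical content of the theorem is your ``Step three,'' which you explicitly defer (``Granting Step three\ldots''), and nothing in the proposal supplies it. The paper's construction is a two-scale function: split $[n]$ into $L$ and $R$ with $|R|=n_R=n/(625r^2)$ where $r=s/k$, and set $f(x,y)=\BB(n_R,3s)(y)$ when $|x|$ lies in a width-$\Theta(\sqrt{n_L})$ middle band and $f=0$ otherwise; the family is its orbit under coordinate permutations. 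Hardness then comes from a single hypergeometric tail bound: while a random chain crosses the middle band it flips $\Theta(\sqrt{n})$ coordinates of which only $\approx\sqrt n/r^2$ land in $R$, too few to traverse $k-1$ blocks of width $\Theta(\sqrt{n_R}/s)$, except with probability $\exp(-\Omega(\sqrt n/r^2))=\exp(-\Omega(n^{0.48}))$. Your proposed per-triple bound (b) is not something you can simply posit; even for this family, verifying it directly for arbitrary comparable triples requires balancing the chance that all three $L$-parts stay in the band against the chance that the $R$-parts cross blocks, which is precisely the case analysis the chain-completion reduction is designed to avoid.

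More seriously, your farness step (i) as stated is \emph{incompatible} with your requirement (ii), so the two halves of your plan cannot both hold for the same family. If a constant fraction of vertex-disjoint chain segments of length $\Theta(n^{0.01})$ each carry $\Omega(n^{0.01})$ alternations, then a constant fraction of \emph{consecutive} comparable triples $x\prec y\prec z$ (with $|y|=|x|+1$, $|z|=|y|+1$) in the bulk realize the pattern $1,0,1$; by the permutation symmetry of $\dno$ a single fixed such triple is then a violation with probability $\Omega(1)$, and a $3$-query tester already wins. The resolution -- and the actual point of the paper -- is that the disjoint witnesses certifying distance from $s$-monotonicity must be \emph{long}: the matching in the violation hypergraph consists of $(3s-1)$-tuples whose consecutive points differ in $\Theta(\sqrt{n_R}/s)=\Theta(n^{0.48})$ coordinates of the hidden subcube $R$, each tuple forcing $\ge s$ changes, with $\Omega(2^n/s)$ disjoint tuples giving total distance $\Omega(2^n)$. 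Your own Step three gestures at a gadget scale of $n^{0.48}$, but your Step (i) machinery (length-$n^{0.01}$ pieces) cannot certify farness for any family whose violations live at that scale. As written, the proposal is a correct proof skeleton with an internally inconsistent instantiation and no construction; it does not establish the theorem.
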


This is also the setting of parametrized property testing introduced by  Parnas et al~\cite{PRR:06}.
In this setting, a property is parametrized by an integer  $k$ and denoted $\property=\{\property_k\}_k$. A \emph{$(k_1,k_2)$-tester} for the family $\property$ is a randomized algorithm which, on input a proximity parameter $\eps\in(0,1)$ and oracle access to an unknown function $f$,  must accept if $f\in\property_{k_1}$,  with probability at least $2/3$; and reject  if $d(f, \property_{k_2})>\eps$, with probability at least $2/3$.

Hence, we are interested  in $(k, g(k,n))$-testing for parametrized monotonicity, denoted $\{{\mathcal M}_k\}_k$. Following the notation just described, we will speak of property testers for (and query complexity lower bounds for) $(k, g(k,n))$-testing the property $\{{\mathcal M}_k\}_k$.

\begin{theorem}\label{cor:lb:k-vs-g(k,n):general}
		Given $2\leq k \leq g(k,n) =\littleO{\sqrt n}$, $(k, g(k,n))$-testing $\{{\mathcal M}_k\}_k$ (on the hypercube) non-adaptively with one-sided error requires $2^{\Omega\left(\frac{\sqrt n}{(k+1)(g(k,n)/k)^2}\right)}$ queries.
\end{theorem}

Parametrized monotonicity testing may provide a new angle for approaching yet unanswered questions towards the goal of understanding the role of adaptivity in testing monotonicity. The current strong lower bounds for testing monotonicity adaptively \cite{ChenST:14, ChenDST:15,  BelovsB:15} come polynomially close to the one-sided error upper bounds \cite{KhotMS:15}, yet the question of whether adaptive algorithms can beat the current lower bounds still remain open.

\subsection{Parametrized monotonicity testing on the hypergrid}

We next study parametrized monotonicity testing of Boolean functions  over the hypergrid $[n]^d$ domain. For these domains \cite{CGGKW:17} shows that testing $k$-monotonicity non-adaptively with two-sided error can be performed with a number of queries independent of $n$, but {\em exponential} in the dimension $d$. More explicitly, $q(n, d, \eps, k)=\min \left( \tildeO{\frac{1}{\eps^2} (5kd/\eps)^d}, 2^{\tildeO{k\sqrt{d}/\eps^2}}\right).$

Our algorithmic results show a contrast to the hypercube case. We obtain a tester with {\em constant} query complexity (independent of $k, n, d$), albeit trading off for the $(k_1, k_2)$ parameters that the tester needs to distinguish between.

\begin{theorem}\label{thm:ub}
	There is a non-adaptive two-sided tester which performs $\poly(1/\eps)$  queries and $(k, 2kd^2/\eps)$-tests $\{{\mathcal M}_k\}_k$ on the hypergrid $[n]^d$.
\end{theorem}

Two-sided versus one-sided testing for monotonicity on the hypercube is an unresolved problem. Thus, in light of our exponential  bounds for the $\{0, 1\}^n$ domain, the dependence in $d$ in the $g(n,d, k)=2kd^2$ function appears necessary.  Achieving a sublinear dependence on $d$ while controlling the query complexity would require new ideas that also apply to the hypercube.

\subsection{Related work and the larger perspective}

As previously mentioned, \cite{NewmanRRS:17} proposes studying more general order patterns for real-valued functions defined over the line domain $[n]$. They view a pattern of length $k$ as a permutation $\pi:[k]\rightarrow [k]$. A function $f:[n]\rightarrow \R$ if $\pi$-free if there do not exist indices $i_1< i_2<\ldots < i_k$ such that $f(i_x) < f(i_y)$ whenever $\pi(x)<\pi(y)$. This is a more fine-grained notion than $k$-monotonicity; for example, $(2,1)$-freeness describes monotonicity, and the intersection of $(2, 1, 3)$-freeness and $(3, 1, 2)$-freeness describes $2$-monotonicity over the reals.  The authors of~\cite{NewmanRRS:17} obtain several results for non-adaptive one-sided testing on the line, essentially distinguishing monotone and non-monotone patterns. They further raise the question of characterizing the testing complexity as a function of the structure of the pattern.  Their lower bounds are $\Omega(\sqrt{n})$; this is quite strong given that the domain size is $n$, and this bound is likely not tight for most patterns. Recently, \cite{Ben-EliezerC:18} showed that for ``most'' permutations $\pi$ testing $\pi$-freeness requires $\frac{n^{1 - 1/(k - \theta(1))}}{\eps^{1/(k - \theta(1) )}}$ queries.                                                                    
Transferring this question to our setting, we may ask:

\begin{question}\label{q:patterns}
On the hypercube, do all order patterns (longer than some constant length) require $\exp(\Omega(\sqrt{n}))$ queries to test?  Can this lower bound be strengthened to $\exp(\Omega(n))$ for most patterns?  Can these patterns be characterized?
\end{question}

The description of a function family as being free from a pattern is common in property testing of Boolean functions, especially in the study of families that exhibit `affine-invariance'. In that line of work, the Boolean domain $\{0,1\}^n$ is viewed as a vector space over the field of two elements, and now one may perform algebraic operations over the domain. For instance, the property of being `triangle-free' is defined as freeness from the pattern $\langle f(x), f(y), f(z)\rangle=\langle 1, 1, 1\rangle $ whenever $x+y+z=0$.
 An almost complete characterization of when such a property is testable with a constant number of queries has recently been obtained as a result of sustained interest in this quest \cite{Green05, KralSV09, Shapira10,  BhattacharyyaCSX11,BhattacharyyaFL13, BhattacharyyaFHHL13, Yoshida14,  BhattacharyyaGS15}.

Monotonicity and $k$-monotonicity however exhibit less symmetric structure, but these families are still invariant under permutations of the variables. Invariance under permutations of the variables is also maintained 
by properties defined by freeness of order patterns. While this symmetry is not so strong as to imply constant query testers in general, it is however crucial in the design of algorithms for such properties. For example, we use this invariance in showing the reduction to a canonical tester in Theorem \ref{thm:canon1}.

The notion of $k$-monotonicity can be easily extended to real-valued functions. In fact, this extension has already implied `tolerant' monotonicity testers for  families consisting of functions $f\colon [n]^d\rightarrow [0,1]$ \cite{CGGKW:17}.
Furthermore, \cite{CGGKW:17} reveals connections between testing $k$-monotonicity and  testing surface area \cite{KearnsR:00,BalcanBBY:12,KothariNOW:14, Neeman:14}, as well as estimating support of distributions \cite{CanonneR:14}.

Besides the relevant extensive literature on monotonicity testing mentioned before, another recent direction that generalizes monotonicity is testing {\em unateness}. Namely, a unate function is monotone on each full chain, however, edge-disjoint chains may be monotone in different directions. Initiated in \cite{GGLRS00}, query-efficient unateness testers were obtained in \cite{GGLRS00, CS:16a, KS:16, BaleshzarCPRS17}, and lower bounds in~\cite{ChenWX:17, BaleshzarCPRS17}.

A trickle of recent work in cryptography focuses on understanding how many negation gates are needed to compute cryptographic primitives, such as one-way permutations, small bias generators, hard core bits, and extractors \cite{GuoMOR:15, GuoK:15, LinZ:16}. Very recently,  motivated by connections to log rank conjecture and the sensitivity conjecture, \cite{DineshS:18} explore connections between the sparsity, sensitivity and alteration complexity of a boolean function (which is essentially identical to the notion of $k$-monotonicity). \\

\paragraph{Organization} We proceed with the proofs of our lower bounds for the hypercube domain in Section \ref{sec:lower-bds} and with our algorithmic results for the hypergrid in Section \ref{sec:upper-bds}.

\section{Lower Bounds over the Hypercube}\label{sec:lower-bds}

\newcommand{\ra}{\rightarrow}
\newcommand{\etal}{{\em{et al}}}

 We prove all of our results in this section.  We first show  in Theorem \ref{thm:lb:k-vs-g(k,n):chain-tester}  that  the basic chain tester detects a violation  with  negligibly small probability, and hence  the $(k, g(k,n))$-problem is hard for chain testers.

\begin{theorem} \label{thm:lb:k-vs-g(k,n):chain-tester}
		Given $2\leq k \leq g(k,n) =\littleO{\sqrt n}$, there exist  $C > 0$, and a collection $\cl{F}$ of Boolean functions, such that
		\begin{enumerate}		
		\item[(i)] every $f \in \cl{F}$ is $\bigOmega{1}$-far from being $g(k,n)$-monotone, and 
		\item[(ii)] the probability that a uniformly random chain in $\{0,1\}^n$  detects a violation to $k$-monotonicity for $f$ is $\littleO{\exp\left(-C \frac{\sqrt n}{(g(k,n)/k)^2}\right)}$.
		\end{enumerate}
\end{theorem}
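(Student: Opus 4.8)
The plan is to prove the theorem by exhibiting an explicit family $\cl{F}$ and then verifying (i) and (ii) essentially independently.

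\emph{The construction.} Following the overview, the idea is to take a function that is far from $g$-monotone but all of whose small witnesses of non-$g$-monotonicity are ``long'' (like the totally symmetric example with $\Theta(g)$ alternating bands near the middle level, where every $(g{+}1)$-tuple witnessing non-$g$-monotonicity spans many Hamming levels), and then to \emph{hide} this gadget inside a small coordinate set $S$ of size $m$, letting $f$ coincide with a fixed monotone (hence $k$-monotone) function off the gadget region. The point of the hiding is that the gadget should only be ``visible'' to a maximal chain that makes an atypical excursion with respect to $S$: a uniformly random maximal chain follows the typical ``diagonal'' trajectory through $\{0,1\}^n$ and ever sees only $O(k)$ of the $\Theta(g)$ bands, which is too few to contain a $k$-violation. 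Concretely I would parameterize the gadget by its number of bands ($\Theta(g)$), the band width, the thickness of the gadget region, and $m$, and tune these so that (a) the gadget region has $\Theta(1)$ measure in $\{0,1\}^n$ and (b) a chain needs a deviation of size $\omega(\text{typical fluctuation})$ to reveal $k{+}1$ bands. The hypothesis $g=\littleO{\sqrt n}$ is exactly what leaves room to choose such an $m$.

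\emph{Step 1: $f$ is $\bigOmega{1}$-far from $g$-monotone (part (i)).} Here I would invoke the matching principle mentioned in the overview: a collection of $t$ vertex-disjoint violations of $g$-monotonicity in $f$ forces $d(f,\mathcal M_g)\ge t/2^n$, since any $g$-monotone function must disagree with $f$ on at least one vertex of each disjoint violation. Producing this matching reduces to routing (near-)disjoint chains, each passing through one point of each of $g{+}1$ consecutive bands; because the comparability bipartite graph between two bands is highly expanding and nearly regular, such a routing exists by Hall's theorem (or by committing to a canonical ``rotation'' bijection between consecutive levels), its size governed by the thinnest band. For $g=\bigO{1}$ this already yields $d(f,\mathcal M_g)=\bigOmega{1}$. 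For super-constant $g$, where a disjoint matching has size at most $2^n/(g{+}1)$, I would instead build a \emph{fractional} matching of weight $\bigOmega{2^n}$ — spreading a tiny weight over the exponentially many violations through each gadget vertex — and conclude via LP duality that every vertex cover, i.e.\ every set of point-changes that destroys all violations, has size $\bigOmega{2^n}$.

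\emph{Step 2: the random chain almost never detects a $k$-violation (part (ii)), and the main obstacle.} Identify a uniformly random maximal chain with a uniformly random ordering $\pi$ of $[n]$. The sequence of values of $f$ along the chain is a deterministic function of how $\pi$ interleaves the coordinates of $S$ with the rest, and the relevant statistic (how deep the chain sits in $S$ at each level) is a hypergeometric-type process concentrated around its mean with fluctuations of order $\sqrt m$. By construction, uncovering a $k$-violation requires this process to deviate enough to straddle $k{+}1$ bands — a ``band span'' we will have arranged to be $\omega(\sqrt m)$ — so a Chernoff/Hoeffding bound for the hypergeometric process (or a reflection/ballot argument for the associated $\pm1$ walk) bounds this probability by $\exp\!\big(-\bigOmega{\sqrt n/(g/k)^2}\big)$, and a union bound over the $\bigO{n}$ levels at which a violation could first be revealed is absorbed into the $\littleO{\cdot}$. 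I expect the main obstacle to be precisely the calibration underlying this argument: the gadget must be simultaneously ``heavy'' (constant measure, so that Step 1 makes $f$ genuinely far from $g$-monotone) and ``deep and thin'' in $S$ (so that the diagonal-hugging random chain misses it with the stated probability). Balancing these forces both the precise relation among $m,g,k,n$ and the restriction $g=\littleO{\sqrt n}$, and explains why the exponent degrades as $g/k$ grows; the most delicate technical point is keeping the far-from-$g$-monotone conclusion alive for super-constant $g$ via the fractional/LP-duality step.
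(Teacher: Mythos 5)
Your construction and your analysis of part (ii) follow the paper's route: hide a balanced-blocks-style banded gadget on a small coordinate set $R$ of size $\Theta(n/(g/k)^2)$, gate it by a middle slab of the remaining coordinates so that the gadget region has constant measure, and bound the probability that a random chain flips enough $R$-coordinates while inside the slab via a Chernoff bound for the hypergeometric distribution. That part of the plan is sound (the required deviation is a constant multiplicative factor above a mean of order $\sqrt{n}/(g/k)^2$, which is exactly what produces the stated exponent).

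There is, however, a genuine gap in your Step 1 for super-constant $g$. You correctly observe that a vertex-disjoint matching of violations, charged one changed point per violation, only gives distance $\Omega(1/g)$, but your proposed fix --- a fractional matching of weight $\Omega(2^n)$ plus LP duality --- cannot work. In a $(g+2)$-uniform violation hypergraph on $2^n$ vertices, \emph{every} fractional matching has total weight at most $2^n/(g+2)$: summing the constraints $\sum_{e \ni v} w_e \le 1$ over all vertices gives $(g+2)\sum_e w_e \le 2^n$. Equivalently, assigning $1/(g+2)$ to each gadget vertex is a feasible fractional vertex cover of weight $o(2^n)$, so LP duality can never certify an $\Omega(2^n)$ integral cover; the integrality gap is precisely where the content lies. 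The paper's argument avoids this by using \emph{long} disjoint witnesses rather than minimal ones: the matching consists of $\Omega(2^n/s)$ disjoint alternating tuples of length $\Theta(s)$ (with $s=g$, violations of $(3s-2)$-monotonicity, one per block structure of $\operatorname{BB}(n_R,3s)$ lifted to the full cube), and any $s$-monotone function can alternate at most $s$ times on such a tuple, hence must differ from $f$ on $\Omega(s)$ of its points. The product $(2^n/s)\cdot s = \Omega(2^n)$ then gives $\Omega(1)$-farness. So the fix is not a fractional relaxation but a per-hyperedge charge that grows linearly with the tuple length; without this your construction is only proved $\Omega(1/g)$-far from $\mathcal{M}_g$, which does not establish part (i) for, e.g., $g = n^{0.01}$.
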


We then show that any other non-adaptive, one-sided tester gives rise to a chain tester, with only a small blowup in the query complexity.

\begin{theorem} \label{thm:lb:na:1s:k-vs-far-from-g(k,n)}		Any non-adaptive one-sided $q$-query $(k, g(k,n))$-tester for Boolean monotonicity over $\{0,1\}^n$  implies an $\bigO{q^{k+1}n}$-query tester that queries points on a distribution over chains, and 
succeeds with constant probability. In particular, if $p$ is the success probability of the basic chain tester, then $p=\Omega(1/q^{k+1})$.
\end{theorem}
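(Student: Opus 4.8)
The plan is to exploit the special shape of one-sided, non-adaptive rejections: such a tester can only reject once it has actually queried an entire violating tuple, and a violating tuple is a chain of $k+1$ points, which can be completed to a full chain from $0^n$ to $1^n$.

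First I would describe the tester $\Tester$ as it is forced to behave. Non-adaptivity means that on internal randomness $r$ the tester queries a fixed set $Q_r\subseteq\{0,1\}^n$ with $\abs{Q_r}\le q$ and returns its answer as a function of $r$ and $f|_{Q_r}$ alone. One-sidedness means $\Tester$ accepts every $k$-monotone function with probability $1$; hence if $\Tester$ rejects on randomness $r$ with oracle $f$, then no $k$-monotone function can agree with $f$ on $Q_r$ — otherwise running $\Tester$ with the same randomness $r$ on such a function would also reject, contradicting one-sidedness. I would then invoke the structural fact that a partial Boolean function on a subset of $\{0,1\}^n$ extends to a $k$-monotone function if and only if it contains no violation; the direction I use is that a violation-free partial assignment \emph{does} extend (concretely: for a violation-free $\phi$ on $S$, set $f(x)=1$ precisely when the longest chain $p_1\prec\cdots\prec p_m\preceq x$ inside $S$ with $\phi(p_1)=1$ and $\phi$ alternating along it has odd length; one checks that $f$ has at most $k$ alternations on every chain and agrees with $\phi$ on $S$). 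Combining the two, $\Tester$ rejects only when $Q_r$ contains a violation $p_1\prec\cdots\prec p_{k+1}$ with $f(p_1)=1$ and $f(p_i)\ne f(p_{i+1})$ for all $i\in[k]$. In particular, whenever $f$ is $\bigOmega{1}$-far from being $g(k,n)$-monotone we get $\probaOf{Q_r\text{ contains a violation}}\ge 2/3$.

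Now I would build the chain tester. One round does the following: sample $r$, hence $Q_r$; pick a uniformly random $(k+1)$-subset $S$ of $Q_r$; if the points of $S$ are pairwise comparable, list them as $p_1\prec\cdots\prec p_{k+1}$, draw a uniformly random full chain $C=\langle 0^n\prec\cdots\prec 1^n\rangle$ passing through all of $p_1,\dots,p_{k+1}$ (concatenating uniform maximal chains of the subcubes $[0^n,p_1],[p_1,p_2],\dots,[p_{k+1},1^n]$), query $f$ on all $n+1$ points of $C$, and reject iff the values on $C$ exhibit a violation. This round never rejects a $k$-monotone function. If $f$ is far from $g(k,n)$-monotone, then conditioned on the (probability $\ge 2/3$) event that $Q_r$ contains a violation $\{p_1^\ast,\dots,p_{k+1}^\ast\}$, the random subset $S$ is exactly this tuple with probability at least $1/\binom{q}{k+1}$, and then $S$ is a chain and \emph{every} full chain through it — in particular $C$ — reveals the violation. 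Hence one round rejects with probability at least $\tfrac{2}{3\binom{q}{k+1}}=\bigOmega{1/q^{k+1}}$, which is the asserted bound $p=\bigOmega{1/q^{k+1}}$ for the single-chain chain tester extracted from $\Tester$. Repeating the round $t=\bigTheta{q^{k+1}}$ times independently and rejecting if some round rejects pushes the rejection probability on far functions above $2/3$, while keeping completeness; the total number of queries is $t(n+1)=\bigO{q^{k+1}n}$ and every query lies on a chain, which is the first assertion.

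The reduction is thus essentially a counting argument plus amplification; the one genuinely nontrivial input is the extension lemma used in the first step — that a violation-free partial assignment really has a $k$-monotone completion — which is what guarantees that a one-sided rejection is witnessed by $k+1$ comparable queried points rather than by some more complicated, non-chain obstruction. A separate small point, which matters only when this statement is later combined with \autoref{thm:lb:k-vs-g(k,n):chain-tester}: the full chains produced by the reduction come from a $\Tester$-dependent distribution, not the uniform one, but since the hard family of \autoref{thm:lb:k-vs-g(k,n):chain-tester} is invariant under permuting coordinates, applying a uniformly random coordinate permutation to any fixed full chain yields a uniformly random full chain, so on average over that family the reduction's chains detect a violation no more often than a uniformly random chain does — and that is exactly the quantity \autoref{thm:lb:k-vs-g(k,n):chain-tester} controls.
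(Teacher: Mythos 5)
Your proof is correct and follows essentially the same route as the paper: complete each comparable $(k+1)$-tuple of the tester's queries to a full chain (costing $\bigO{q^{k+1}n}$ queries), observe that a one-sided non-adaptive tester can only reject upon actually querying a violating tuple, and relate the resulting chain-based tester to the uniform-chain tester via permutation invariance and a counting/union-bound over the $\binom{q}{k+1}$ tuples. The one place you go beyond the paper is in explicitly stating and sketching the extension lemma (a violation-free partial assignment extends to a $k$-monotone function), which the paper uses implicitly when asserting that $T'$ "can only reject if it finds a $(k+1)$-tuple forming a violation''; your sketch of that lemma is sound and is a worthwhile addition.
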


Theorem \ref{thm:lb:k-vs-g(k,n):chain-tester} and Theorem \ref{thm:lb:na:1s:k-vs-far-from-g(k,n)} imply Theorem \ref{cor:lb:k-vs-g(k,n):general}.

Instantiating $k$ and $g(k,n)$ in Theorem ~\ref{cor:lb:k-vs-g(k,n):general}, we obtain the following immediate corollaries.

\begin{corollary} \label{cor:lb:na:1s:2-vs-far-from-k}
		Any non-adaptive one-sided $(2,n^{0.01})$-tester for parametrized monotonicity requires  $\exp(\bigOmega{n^{0.48}})$ queries.
\end{corollary}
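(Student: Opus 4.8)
The plan is to obtain this as an immediate instantiation of \autoref{cor:lb:k-vs-g(k,n):general}, taking $k=2$ and $g(k,n)=n^{0.01}$. First I would verify the hypothesis $2\leq k\leq g(k,n)=\littleO{\sqrt n}$: we have $k=2$, and for all sufficiently large $n$ we have $n^{0.01}\geq 2$, so $2\leq k\leq g(k,n)$; moreover $n^{0.01}=\littleO{n^{1/2}}$, so $g(k,n)=\littleO{\sqrt n}$. Hence \autoref{cor:lb:k-vs-g(k,n):general} applies.

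With the hypotheses checked, \autoref{cor:lb:k-vs-g(k,n):general} states that any non-adaptive one-sided $(2,n^{0.01})$-tester requires $\exp\!\left(\Omega\!\left(\frac{\sqrt n}{(k+1)(g(k,n)/k)^2}\right)\right)$ queries. Substituting $k+1=3$ and $g(k,n)/k=n^{0.01}/2$, the exponent simplifies as
\[
\frac{\sqrt n}{3\cdot (n^{0.01}/2)^2} \;=\; \frac{4}{3}\cdot\frac{n^{1/2}}{n^{0.02}} \;=\; \frac{4}{3}\,n^{0.48} \;=\; \bigOmega{n^{0.48}},
\]
which yields the claimed lower bound of $\exp(\bigOmega{n^{0.48}})$ queries.

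Since \autoref{cor:lb:k-vs-g(k,n):general} is itself the combination of \autoref{thm:lb:k-vs-g(k,n):chain-tester} (a hard family that is $\bigOmega{1}$-far from $g(k,n)$-monotone yet fools the basic chain tester with probability $\exp(-\Omega(\sqrt n/(g(k,n)/k)^2))$) and \autoref{thm:lb:na:1s:k-vs-far-from-g(k,n)} (which converts any non-adaptive one-sided $q$-query tester into a chain tester succeeding with probability $\Omega(1/q^{k+1})$), an equivalent route is to invoke those two theorems directly with $k=2$, $g=n^{0.01}$: one gets $\Omega(1/q^{3})\leq\exp(-\Omega(n^{0.48}))$, forcing $q\geq\exp(\Omega(n^{0.48}))$. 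There is no genuine obstacle here — the only care needed is to track how the factor $k+1=3$ from the reduction and the factor $(g(k,n)/k)^2=\Theta(n^{0.02})$ from the chain-tester failure probability combine, so that the net exponent comes out to $\Theta(n^{0.48})$.
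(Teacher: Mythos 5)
Your proposal is correct and matches the paper exactly: the paper obtains this corollary as an immediate instantiation of \autoref{cor:lb:k-vs-g(k,n):general} with $k=2$ and $g(k,n)=n^{0.01}$, and your arithmetic $\sqrt n/\bigl(3\cdot(n^{0.01}/2)^2\bigr)=\Theta(n^{0.48})$ is the intended computation. The ``alternative route'' you mention is not genuinely different, since the general corollary is itself just the combination of those two theorems.
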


\begin{corollary} \label{cor:lb:na:1s:new}
		Let $2 \leq k = \littleO{\sqrt{n}}$.  Then any non-adaptive, one-sided tester for $k$-monotonicity  requires $\bigOmega{\exp\left(\frac{\sqrt n}{k+1}\right)}$ queries. 
\end{corollary}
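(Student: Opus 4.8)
The plan is to read this off from \autoref{cor:lb:k-vs-g(k,n):general} as its degenerate $g(k,n)=k$ instance. The key observation is that the ordinary one-sided, non-adaptive testing problem for $k$-monotonicity --- deciding whether $f\in\mathcal{M}_k$ or $d(f,\mathcal{M}_k)\geq\eps$ --- is precisely the $(k,g(k,n))$-problem with $g(k,n)=k$: in that case ``far from being $g(k,n)$-monotone'' is literally ``far from being $k$-monotone'', so a $(k,k)$-tester is exactly a tester for the property $\mathcal{M}_k$. Consequently any non-adaptive one-sided $q$-query tester for $k$-monotonicity is in particular a non-adaptive one-sided $q$-query $(k,k)$-tester, and the lower bound established for the latter applies to it verbatim.

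First I would verify that $g(k,n):=k$ is an admissible choice in \autoref{cor:lb:k-vs-g(k,n):general}, i.e.\ that $2\leq k\leq g(k,n)=\littleO{\sqrt n}$: with $g(k,n)=k$ the middle inequality is an equality, while $2\leq k$ and $k=\littleO{\sqrt n}$ are exactly the standing hypotheses of the corollary being proved. Then I would substitute. Since $g(k,n)/k=1$, the bound
\[
    \exp\!\left(\Omega\!\left(\frac{\sqrt n}{(k+1)\,(g(k,n)/k)^2}\right)\right)
\]
guaranteed by \autoref{cor:lb:k-vs-g(k,n):general} collapses to $\exp\!\left(\Omega\!\left(\frac{\sqrt n}{k+1}\right)\right)$, and absorbing the hidden constant into the exponent yields the claimed $\bigOmega{\exp(\sqrt n/(k+1))}$.

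There is no genuine obstacle remaining at this point; all of the work is upstream. \autoref{thm:lb:k-vs-g(k,n):chain-tester} supplies the hard family $\cl{F}$ whose $k$-monotonicity violations are concealed on a sparse coordinate set, so that a uniformly random chain exhibits a violation only with probability $\littleO{\exp(-C\sqrt n/(g(k,n)/k)^2)}$, while a large matching in its violation hypergraph certifies that every member is $\bigOmega{1}$-far from $g(k,n)$-monotone. \autoref{thm:lb:na:1s:k-vs-far-from-g(k,n)} then converts any non-adaptive one-sided $q$-query tester into a chain tester that succeeds with probability $\Omega(1/q^{k+1})$, which forces $q=\exp(\Omega(\sqrt n/((k+1)(g(k,n)/k)^2)))$ on $\cl{F}$ --- this is precisely \autoref{cor:lb:k-vs-g(k,n):general}. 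The present statement is just the $g=k$ specialization of that master bound (exactly as \autoref{cor:lb:na:1s:2-vs-far-from-k} is its $g=n^{0.01}$ specialization), so the only step deserving a line of its own is the trivial admissibility check carried out above.
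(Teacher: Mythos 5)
Your proposal is correct and is exactly what the paper does: the paper presents \autoref{cor:lb:na:1s:new} as an immediate instantiation of \autoref{cor:lb:k-vs-g(k,n):general} with $g(k,n)=k$, under which the $(k,k)$-problem coincides with standard one-sided non-adaptive testing of $k$-monotonicity and the exponent collapses to $\sqrt n/(k+1)$. Your admissibility check ($2\leq k=g(k,n)=\littleO{\sqrt n}$) is the only step needed, and you carried it out.
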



Note that the lower bound of Corollary ~\ref{cor:lb:na:1s:new} is  $>\exp(\sqrt[4]{n})$ for $2 \leq k \leq \sqrt[4]{n}$. 
Using the previous lower bound of  $\bigOmega{\left( \frac{n}{k^2} \right)^{k/4}}$ for any $2 \leq k = \littleO{\sqrt n}$ from \cite{CGGKW:17}, we obtain the following immediate consequence.

\begin{corollary} \label{cor:lb:na:1s}
		Any non-adaptive, one sided tester for $k$-monotonicity, for $2 \leq k =\littleO{\sqrt n}$, requires $\Omega(\exp(\sqrt[4]{n}))$ queries.
\end{corollary}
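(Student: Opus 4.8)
The plan is to obtain \autoref{cor:lb:na:1s} by patching together two lower bounds that are each strong in a complementary regime of $k$: the new bound $\bigOmega{\exp(\sqrt n/(k+1))}$ of \autoref{cor:lb:na:1s:new} (itself a consequence of \autoref{thm:lb:k-vs-g(k,n):chain-tester} and \autoref{thm:lb:na:1s:k-vs-far-from-g(k,n)}), and the bound $\bigOmega{(n/k^2)^{k/4}}$ established in \cite{CGGKW:17}. Neither alone covers the full range $2\le k=\littleO{\sqrt n}$: the first becomes trivial as $k\to\sqrt n$, while the second is only polynomial in $n$ for constant $k$. I would therefore split the range at a threshold $k\asymp\sqrt[4]n$ and verify that at and around this crossover both bounds are still of order $\exp(\sqrt[4]n)$.

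For the low range $2\le k\le\sqrt[4]n$, \autoref{cor:lb:na:1s:new} directly gives a query lower bound of $\bigOmega{\exp(\sqrt n/(k+1))}$, and since $k+1=O(\sqrt[4]n)$ we get $\sqrt n/(k+1)=\bigOmega{\sqrt[4]n}$, hence $\bigOmega{\exp(\sqrt[4]n)}$ queries (with the clean estimate $\sqrt n/(k+1)\ge\sqrt[4]n$ whenever $k+1\le\sqrt[4]n$). For the high range $\sqrt[4]n\le k=\littleO{\sqrt n}$, I would invoke the \cite{CGGKW:17} bound. Setting $t:=\sqrt n/k$, the hypotheses translate to $t=\omega(1)$ (from $k=\littleO{\sqrt n}$) and $t\le\sqrt[4]n$ (from $k\ge\sqrt[4]n$); taking logarithms and using $\ln t\ge 2$ for $n$ large,
\[
\ln\!\big((n/k^2)^{k/4}\big)=\frac{k}{4}\ln\!\frac{n}{k^2}=\frac{\sqrt n}{2t}\ln t\ \ge\ \frac{\sqrt[4]n}{2}\cdot 2\ =\ \sqrt[4]n,
\]
so $(n/k^2)^{k/4}\ge\exp(\sqrt[4]n)$, again an $\bigOmega{\exp(\sqrt[4]n)}$ lower bound on the number of queries. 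Since the two ranges together exhaust all admissible $k$, this proves the corollary.

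I do not expect a genuine obstacle here. All the substance lives in the earlier results — the hard-instance construction of \autoref{thm:lb:k-vs-g(k,n):chain-tester}, the reduction of \autoref{thm:lb:na:1s:k-vs-far-from-g(k,n)} from arbitrary non-adaptive one-sided testers to chain testers, and the prior bound of \cite{CGGKW:17} — and once these are available the corollary is a short case split. The only care required is the elementary bookkeeping of the exponent near $k\asymp\sqrt[4]n$ sketched above: one should pick the split threshold (say $\sqrt[4]n-1$, applying \autoref{cor:lb:na:1s:new} for $k\le\sqrt[4]n-1$ and \cite{CGGKW:17} for $k\ge\sqrt[4]n-1$) so that the two estimates meet with at most a constant loss, which is harmless since the statement concerns the growth rate $\exp(\sqrt[4]n)$.
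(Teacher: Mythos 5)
Your proposal is correct and is exactly the paper's argument: the paper obtains this corollary as an "immediate consequence" of \autoref{cor:lb:na:1s:new} (used for $2\le k\le \sqrt[4]{n}$, where $\sqrt{n}/(k+1)\ge \sqrt[4]{n}$ up to constants) combined with the $\bigOmega{(n/k^2)^{k/4}}$ bound of \cite{CGGKW:17} for the remaining range $\sqrt[4]{n}\le k=\littleO{\sqrt n}$. Your exponent bookkeeping at the crossover (via $t=\sqrt n/k$ and the monotonicity of $\ln t/t$) is sound and simply makes explicit what the paper leaves implicit.
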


\subsection{Proof of  Theorem \ref{thm:lb:k-vs-g(k,n):chain-tester}}

 Define the {\em weight} of an element $x$ in $\{0,1\}^n$, denoted $|x|$, to be the number of non-zero entries of $x$.

We first recall some standard useful facts.

\begin{fact} \label{fact:heavy-slice}
		The maximum value of $\binom{n}{t}$ occurs when $t = \lfloor n/2 \rfloor$, and this maximum value is less than $2 \cdot 2^n/\sqrt{n}$.

\end{fact}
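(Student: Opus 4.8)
The plan is to prove the two assertions in turn; both are classical and need only elementary estimates.

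\textbf{Location of the maximum.} First I would look at the ratio of consecutive binomial coefficients: for $0\le t<n$,
\[
	\frac{\binom{n}{t+1}}{\binom{n}{t}}=\frac{n-t}{t+1},
\]
which is $\ge 1$ exactly when $t\le (n-1)/2$. Hence the sequence $\binom{n}{0},\binom{n}{1},\dots,\binom{n}{n}$ is non-decreasing up to index $\lfloor n/2\rfloor$ and non-increasing thereafter, so its maximum is attained at $t=\lfloor n/2\rfloor$ (and also at $\lceil n/2\rceil$ when $n$ is odd). This settles the first part.

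\textbf{Upper bound, even $n$.} Write $n=2m$ and set $P=\binom{2m}{m}/4^m=\prod_{i=1}^{m}\frac{2i-1}{2i}$. Since $(2i-1)(2i+1)=4i^2-1<4i^2$, each factor obeys $\bigl(\frac{2i-1}{2i}\bigr)^2<\frac{2i-1}{2i+1}$, so
\[
	P^2<\prod_{i=1}^{m}\frac{2i-1}{2i+1}=\frac{1}{2m+1},
\]
the last product telescoping. Therefore $\binom{2m}{m}<\frac{4^m}{\sqrt{2m+1}}=\frac{2^n}{\sqrt{n+1}}<\frac{2^n}{\sqrt n}$, which is well within the claimed bound.

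\textbf{Upper bound, odd $n$, and slack.} For $n=2m+1$ I would use the Pascal identity $\binom{2m+2}{m+1}=2\binom{2m+1}{m}$, so $\binom{n}{\lfloor n/2\rfloor}=\binom{2m+1}{m}=\frac12\binom{2m+2}{m+1}$; applying the even-case bound to $2m+2$ gives $\binom{n}{\lfloor n/2\rfloor}<\frac12\cdot\frac{2^{2m+2}}{\sqrt{2m+2}}=\frac{2^n}{\sqrt{n+1}}<\frac{2^n}{\sqrt n}$. In either parity one in fact gets $\binom{n}{\lfloor n/2\rfloor}<2^n/\sqrt n<2\cdot 2^n/\sqrt n$, so the factor $2$ in the statement is comfortable slack (and trivially covers the degenerate small-$n$ cases). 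There is no genuine obstacle here; the only mildly delicate point is pinning down the $\sqrt n$ in the denominator by a self-contained argument rather than invoking Stirling's formula, and the squaring-and-telescoping trick above does exactly that.
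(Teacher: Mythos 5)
Your proof is correct. The paper itself offers no proof of this fact—it is simply stated as a standard estimate—so there is no argument to compare against; your write-up is a valid self-contained justification. Both halves check out: the ratio test $\binom{n}{t+1}/\binom{n}{t}=(n-t)/(t+1)$ correctly locates the maximum at $t=\lfloor n/2\rfloor$, and the Wallis-style squaring-and-telescoping bound $P^2<\prod_{i=1}^{m}\frac{2i-1}{2i+1}=\frac{1}{2m+1}$ gives the sharper $\binom{n}{\lfloor n/2\rfloor}<2^n/\sqrt{n+1}$ in both parities, which comfortably implies the stated $2\cdot 2^n/\sqrt{n}$.
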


\begin{fact} \label{fact:all-points-in-middle}
		There exists a constant $C>0$, such that for every $\eps > 0$, the number of points of  $\{0,1\}^n$ that with weight outside the \emph{middle  levels}  $[\frac{n}{2} - \sqrt n \log \frac{C}{\eps}, \frac{n}{2} + \sqrt n \log \frac{C}{\eps}]$  is at most $\eps 2^{n-1}$.
\end{fact}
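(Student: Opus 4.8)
The plan is to recast the statement as a tail bound for the binomial distribution and dispatch it with a standard concentration inequality. Write $t = t(\eps) \eqdef \sqrt n\log(C/\eps)$ for the half-width of the middle band. Since the weight $\abs{x}$ of a uniformly random $x\in\{0,1\}^n$ is distributed as $\binomial{n}{1/2}$, the number of points of $\{0,1\}^n$ whose weight lies outside $[\frac n2 - t, \frac n2 + t]$ is exactly $2^n\probaOf{\abs{X - n/2} > t}$, where $X\sim\binomial{n}{1/2}$. So the claim is equivalent to $\probaOf{\abs{X - n/2} > t}\le\eps/2$, and everything reduces to a one-dimensional estimate.

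For that estimate I would apply a Chernoff/Hoeffding bound to $X$, viewed as a sum of $n$ independent $\bernoulli{1/2}$ variables, which yields $\probaOf{\abs{X-n/2}>t}\le 2\exp\!\left(-2t^2/n\right)$. Substituting $t^2/n = \log^2(C/\eps)$ makes the right-hand side $2\exp\!\left(-2\log^2(C/\eps)\right)$, so the whole fact follows from the purely numerical claim that $2\exp\!\left(-2\log^2(C/\eps)\right)\le\eps/2$ for all relevant $\eps$, for a suitably chosen absolute constant $C$. Taking logarithms, this asks that $2\log^2(C/\eps)\ge\log(4/\eps)$; as the left side is quadratic in $\log(1/\eps)$ while the right side is linear, it holds for every $\eps\in(0,1]$ once $C$ is a large enough constant, and for $\eps\ge 1$ the bound is immediate since $\eps 2^{n-1}\ge 2^{n-1}$ and a band of width $\Theta(\sqrt n)$ around $n/2$ already contains more than half the cube. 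I would just fix a concrete value (e.g. $C=16$) and not belabor the optimization; one could equivalently run the combinatorial version of the same argument, summing binomial coefficients against the bound of \autoref{fact:heavy-slice}.

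I do not expect a genuine obstacle here — this is a routine concentration/anti-concentration estimate for $\binomial{n}{1/2}$, and the only steps requiring any care are (i) checking that the chosen $C$ is valid uniformly over the whole stated range $\eps>0$, in particular for $\eps$ near or above $1$ where the band must still have width $\Omega(\sqrt n)$, and (ii) being consistent about whether $\log$ denotes the natural logarithm or $\log_2$, since that choice only rescales $C$ by a constant factor.
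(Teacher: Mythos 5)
Your argument is correct. The paper states this as an unproved \emph{Fact} (it is a standard Chernoff--Hoeffding tail bound for $\binomial{n}{1/2}$), so there is no authorial proof to compare against; your reduction to $\probaOf{\abs{X-n/2}>t}\le 2\exp(-2t^2/n)$ with $t=\sqrt{n}\log(C/\eps)$, followed by the numerical check $2\log^2(C/\eps)\ge\log(4/\eps)$, is exactly the intended route. The only point worth tightening is the regime $\eps\ge 1$: for $\eps\ge 2$ the claim is vacuous since $\eps 2^{n-1}\ge 2^n$, and for $1\le\eps<2$ the half-width is still at least $\sqrt{n}\log(C/2)=\Omega(\sqrt n)$ for your choice $C=16$, so the "more than half the cube" observation applies; as stated, your phrase "band of width $\Theta(\sqrt n)$" silently assumes $\eps$ is bounded away from $C$, which the case split above makes rigorous.
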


In Section \ref{sec:hard:family}, we define our hard family, and show that every function in this family is indeed far from $g(k,n)$-monotonicity.  In Section \ref{sec:hard:family:chain-tester} we show that this family is  hard for the chain tester.

The hard family hides instances of a \emph{balanced blocks} function, which was previously used in~\cite{CGGKW:17} towards proving that testing $k$-monotonicity is at least as hard as testing monotonicity, even for an adaptive two-sided tester.

\newcommand{\BB}{\operatorname{BB}}
\begin{definition}[Balanced Blocks function]\label{def:balanced:block}
		For every $n$ and $\ell \leq o(\sqrt n)$, let us partition the vertex set of the Hamming cube into $\ell$ blocks $B_1, B_2, \ldots, B_{\ell}$ where every block $B_i$ consists of \emph{all} points in \emph{consecutive levels} of the Hamming cube, such that all of the blocks have \emph{roughly} the same size, i.e., for every $i \in \ell$, we have $$\left(1 - \frac{\ell}{\sqrt n} \right) \frac{2^n}{\ell} \leq \abs{B_i} \leq \left(1 + \frac{\ell}{\sqrt n} \right) \frac{2^n}{\ell}.\footnote{The blocks are stacked successively. The block with index i+1 lies immediately above the one with index i.}$$ 

		Then the corresponding \emph{balanced blocks function with $\ell$ blocks}, denoted\footnote{We will arbitrarily fix a function that satisfies these conditions.} $\operatorname{BB}(n, \ell) \colon \{0,1\}^n \to \{0,1\}$, is defined to be the blockwise constant function which takes value $1$ on all of $B_1$ and whose value alternates on consecutive blocks.\\

\end{definition} 
Note that $\BB(n, \ell)$ is a totally symmetric function: it is unchanged under permutations of its inputs.  Thus, we can partition $\{0,1,\ldots,n\}$ into $\ell$ intervals $I_1,I_2,\ldots,I_{\ell},$ such that $I_i$ is the set of Hamming levels that $B_i$ contains.

\cite{CGGKW:17} shows that Balanced Blocks functions satisfy a useful property that we soon recall  in Claim \ref{claim:bb:matching:size}, after making an important definition.

\begin{definition}[Violation hypergraph with respect to $\ell$-monotonicity (\cite{CGGKW:17})]
		Given a function $f\colon \{0,1\}^n \to \{0,1\}$, the \emph{violation hypergraph of $f$} is $H_{\rm viol}(f) = ( \{0,1\}^n, E(H_{\rm viol}) )$ where $(x_1, x_2, \cdots, x_{\ell+1}) \in E(H_{\rm viol})$ if the ordered $(\ell+1)$-tuple $x_1 \prec x_2 \prec \ldots \prec x_{\ell+1}$ (which is a $(\ell+1)$-uniform hyperedge) forms a violation to $\ell$-monotonicity in $f$.
		 A collection $M_h$ of pairwise disjoint $(\ell + 1)$-uniform hyperedges of the violation hypergraph  is said to form a \emph{violated matching}. 
\end{definition}

\begin{claim}[{\cite[Claim~3.8]{CGGKW:17}}]\label{claim:bb:matching:size}
		Let $h \eqdef \operatorname{BB}(n, \ell)$. Then $h$ is $(\ell-1)$-monotone and not $(\ell - 2)$-monotone. Furthermore, the violation graph of $h$ with respect to $(\ell - 2)$-monotonicity contains a violated matching of size at least $\frac{(1 - \littleO{1}) 2^n}{\ell}$, where every edge of the matching  $y_1 \preceq \dots \preceq y_{\ell-1}$ has $h(y_1) = 1$ and $h(y_i) \neq h(y_{i+1})$ for $1 \leq i < \ell-1$.\\
\end{claim}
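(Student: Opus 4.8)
The plan is to reduce the violated‑matching assertion to a matching problem in a layered hypergraph and then obtain a near‑perfect matching by way of a fractional relaxation; the two auxiliary assertions, that $h \eqdef \BB(n,\ell)$ is $(\ell-1)$-monotone and not $(\ell-2)$-monotone, are routine from the fact that $h$ is constant on each of the $\ell$ blocks with values strictly alternating on consecutive blocks (a violation must begin at a $1$, so it spans only a bounded run of blocks, and one such run realizes a violation to $(\ell-2)$-monotonicity). For the matching, I would first \emph{identify the violated edges exactly}. If $y_1 \prec \cdots \prec y_{\ell-1}$ is a violation to $(\ell-2)$-monotonicity with $h(y_1)=1$, then consecutive $y_i,y_{i+1}$ have opposite $h$-values, hence lie in different blocks, and comparability forces their block indices $c_1 < c_2 < \cdots < c_{\ell-1}$ to be strictly increasing; alternation of the values makes each gap $c_{i+1}-c_i$ odd, hence $\ge 1$, and since there are only $\ell$ blocks, $c_{\ell-1}-c_1 \le \ell-1$, so a short parity count forces every gap to equal $1$. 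Thus the $\ell-1$ points occupy one block each in the unique run $B'_1,\dots,B'_{\ell-1}$ of $\ell-1$ consecutive blocks starting on a value-$1$ block, with $y_i \in B'_i$; conversely any increasing transversal of $B'_1,\dots,B'_{\ell-1}$ is such a violation. Hence $H_{\rm viol}$ is precisely the layered hypergraph $\mathcal H$ with ground set $B'_1 \sqcup \cdots \sqcup B'_{\ell-1}$ whose edges are the increasing transversals, and $\min_i |B'_i| \ge (1-\ell/\sqrt n)\,2^n/\ell = (1-\littleO{1})\,2^n/\ell$ since $\ell = \littleO{\sqrt n}$.

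Next I would build a \emph{large fractional matching} of $\mathcal H$ by a random transversal. Pick a uniformly random maximal chain $\gamma$ of $\{0,1\}^n$, and independently, for each $i$, pick a level $L_i$ in the interval of $B'_i$ with $\probaOf{L_i = t} = \binom{n}{t}/|B'_i|$; let $e(\gamma,\mathbf L)$ be the transversal formed by the vertices of $\gamma$ at levels $L_1 < \cdots < L_{\ell-1}$ (a valid edge of $\mathcal H$). For any $v \in B'_i$, independence together with $\probaOf{v \in \gamma} = 1/\binom{n}{|v|}$ gives $\probaOf{v \in e(\gamma,\mathbf L)} = \frac{\binom{n}{|v|}/|B'_i|}{\binom{n}{|v|}} = 1/|B'_i|$. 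Setting $x_e := (\min_i|B'_i|)\cdot \probaOf{e(\gamma,\mathbf L) = e}$ therefore yields a feasible fractional matching — the load at any $v \in B'_i$ is $\min_i|B'_i|/|B'_i| \le 1$ — of total value $\min_i|B'_i| = (1-\littleO{1})\,2^n/\ell$.

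To pass \emph{from the fractional to an integral matching}, I would exploit that $\mathcal H$ is a layered path hypergraph: form the DAG with a source $s$ joined to all of $B'_1$, an arc $u \to w$ for each comparable pair $u \in B'_i$, $w \in B'_{i+1}$, and all of $B'_{\ell-1}$ joined to a sink $t$. Then $s$--$t$ dipaths correspond exactly to edges of $\mathcal H$, internally vertex‑disjoint families of dipaths to matchings, and $s$--$t$ vertex separators inside the ground set to vertex covers of $\mathcal H$. By Menger's theorem the matching number $\nu(\mathcal H)$ equals the cover number $\tau(\mathcal H)$, and combined with the always‑valid chain $\nu(\mathcal H) \le \nu^{\ast}(\mathcal H) = \tau^{\ast}(\mathcal H) \le \tau(\mathcal H)$ (weak duality, plus LP strong duality between the fractional‑matching and fractional‑cover LPs), all four quantities coincide. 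Hence $\nu(\mathcal H) = \nu^{\ast}(\mathcal H) \ge (1-\littleO{1})\,2^n/\ell$ by the previous step, and each edge of the matching has the form $y_1 \prec \cdots \prec y_{\ell-1}$ with $h(y_1)=1$ and $h(y_i)\ne h(y_{i+1})$ — exactly the violated matching asserted.

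The delicate point is the last step: a near‑perfect fractional matching of an $(\ell-1)$-uniform hypergraph need not arise from a near‑perfect integral one, and the gap is closed here only because $\mathcal H$ is the path hypergraph of a DAG, so that Menger forces $\nu=\tau$. The alternative — greedily extending a partial matching one layer at a time and invoking a Kruskal--Katona‑type bound to argue it reaches all but a $\littleO{1}$-fraction of the next layer — also works, but requires controlling shadows of non‑uniform ``almost full'' sets, which is considerably messier; this is why I would take the duality route.
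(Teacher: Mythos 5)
Your proof is correct, but note that there is nothing in this paper to compare it against: the claim is imported verbatim from \cite[Claim~3.8]{CGGKW:17}, and no proof is given here. Taken on its own terms, your argument is sound. The structural step is right: comparability plus alternation of values forces the block indices of a violating tuple to be strictly increasing with odd gaps, and since $\ell-2$ odd gaps must sum to at most $\ell-1$, every gap equals $1$; hence the violated hyperedges are exactly the increasing transversals of the unique run of $\ell-1$ consecutive blocks whose first block carries the value $1$. (This implicitly requires the block containing $0^n$ to have value $0$ --- under the opposite orientation a full transversal would already violate $(\ell-1)$-monotonicity, so the definition must be read this way for the claim to hold at all; that ambiguity is in the paper's restatement of the definition, not in your argument.) The random-chain construction does give a feasible fractional matching of value $\min_i |B'_i| \geq (1-\littleO{1})\,2^n/\ell$, since $\probaOf{v\in e(\gamma,\mathbf L)}=1/|B'_i|$ holds exactly and the transversal is always a valid edge. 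The step you flag as delicate is also handled correctly: because the hypergraph is the path hypergraph of a layered DAG, Menger's theorem gives $\nu=\tau$, which together with the standard chain $\nu\le\nu^{*}=\tau^{*}\le\tau$ collapses the integrality gap, so the fractional value transfers to an integral matching, and every edge of that matching is by construction a tuple $y_1\prec\dots\prec y_{\ell-1}$ with $h(y_1)=1$ and alternating values. This is a clean, self-contained derivation of exactly the statement as quoted; whether it coincides with the route taken in \cite{CGGKW:17} cannot be determined from this paper alone.
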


This machinery allows us to deduce the following.

\begin{claim} \label{claim:bb:3k:blocks:far:k-mon}
		Let $h \eqdef \operatorname{BB}(n, 3k)$. Then $d(h, \mathcal{M}_k) \geq \bigOmega{1}$ \footnote{d(f,g) here denotes the Hamming distance between boolean functions $f$ and $g$ defined on the binary cube. Thus, \ref{claim:bb:3k:blocks:far:k-mon} says that the function $h$ is sufficiently far from being $k$-monotone.}.
\end{claim}

\begin{proofof} {Claim \ref{claim:bb:3k:blocks:far:k-mon}}

		By Claim \ref{claim:bb:matching:size},  $H_{\rm viol}(h)$ contains a matching $M_h$ of $(3k - 1)$-tuples of size $\frac{(1 - \littleO{1}) 2^n}{3k}$, and for every tuple in the matching $y_1 \preceq \dots \preceq y_{3k-1}$ we have $h(y_1) = 1$ and $h(y_i) \neq h(y_{i+1})$ for $1 \leq i < 3k-1$. We see that any $k$-monotone function close to $h$ must have at most $k$ flips within any such tuple, by definition. It follows that  any $k$-monotone function differs from $h$ in at least $k-1$ vertices in every tuple of $M_h$. Thus, the Hamming distance between $h$ and any $k$-monotone function is at least $$ \frac{(1 - \littleO{1}) 2^n}{3k} \cdot (k-1)\geq  \frac{ 2^n}{5}.$$

\end{proofof} 

We are now ready to describe the hard family.

\subsubsection{The Hard Family} \label{sec:hard:family}

In what follows, let $s \eqdef g(k,n)$. Also, let $r \eqdef \frac{g(k,n)}{k}$.
 We now describe the hard instance for $(k,s)$-testing.

Consider the partition of the set of indices  in $[n]$ into two different \emph{disjoint} sets, $L$ and $R$,  with sizes $|L| = n_L = n \cdot (1 - \frac{1}{625r^2})$ and $|R| = n_R = \frac{n}{625r^2}$, respectively.\footnote{For the sake of presentation, we ignore integrality issues where possible.} We will write $z \in \{0,1\}^n$ as $z = (x,y)$, where $x \in \{0,1\}^{|L|}$ and $y \in \{0,1\}^{|R|}$.  We define $\operatorname{MID}_L \eqdef \{i : \mid i - \frac{n_L}{2} \mid \leq \frac{\sqrt{n_L}}{100} \}$, to denote the set of ``balanced'' inputs restricted to the set of indices in $L$. 

Assuming $k$ is even\footnote{For odd $k$, the function is defined almost analogously -- the only difference is that $f_{n_L}(x,y) = 1 $ whenever $|x| > \frac{n_L}{2} + \frac{\sqrt n_L}{100}$. We make this  assumption throughout the paper.}, 
we define $  f_{n_L}:\{0,1\}^n\ra \{0,1\}$ by 

	  \[
			  f_{n_L}(x,y) \eqdef \begin{cases}
					  0, &  \text{ if } |x| \not\in \operatorname{MID}_L \\ 
					  \operatorname{BB}(n_R, 3s)(y), & \text{ otherwise i.e.,  } |x| \in \operatorname{MID}_L, \\            
			  \end{cases}
  \]

where $x \in \{0,1\}^{|L|}$ and $y \in \{0,1\}^{|R|}$.

 For $x \in \{0,1\}^L$, let us denote by $H_x$ the restriction of the hypercube $\{0,1\}^n$ to the points $(x, y)$, with $y \in \{0,1\}^{|R|}$. Note that for $x \in \operatorname{MID}_L$, the restriction of the function to $H_x$ is a copy of balanced blocks function on $n_R = n/625r^2$ variables with $3s$ blocks. 

\begin{claim} \label{claim:one-f-is-far-from-s-mon}
		The function $f=f_{n_L}$ is $\Omega(1)$-far from  being $s$-monotone.
\end{claim}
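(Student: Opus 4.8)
The goal is to show that $f = f_{n_L}$ is $\Omega(1)$-far from every $s$-monotone function, where $s = g(k,n)$. The natural approach, following the template of \autoref{claim:bb:3k:blocks:far:k-mon}, is to exhibit a large violated matching for $s$-monotonicity inside $f$ and then argue that any $s$-monotone function must disagree with $f$ on a constant fraction of the points covered by this matching. The key structural observation is that for each $x \in \operatorname{MID}_L$, the restriction of $f$ to the subcube $H_x$ is an exact copy of $\BB(n_R, 3s)$, and by \autoref{claim:bb:matching:size} (applied with $\ell = 3s$) this restricted function has a violated matching $M_x$ of size at least $\frac{(1-o(1)) 2^{n_R}}{3s}$ consisting of $(3s-1)$-tuples $y_1 \preceq \cdots \preceq y_{3s-1}$ with $h(y_1) = 1$ and $h(y_i) \neq h(y_{i+1})$. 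Crucially, these tuples live entirely within the coordinates of $R$, so for two distinct $x, x' \in \operatorname{MID}_L$ the matchings $M_x$ and $M_{x'}$ are automatically vertex-disjoint (the $L$-coordinates differ). Taking the union $M = \bigcup_{x \in \operatorname{MID}_L} M_x$ therefore gives a single violated matching for $s$-monotonicity in $f$ of total size $|\operatorname{MID}_L| \cdot \frac{(1-o(1)) 2^{n_R}}{3s}$.

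First I would verify the two things that make the displayed tuples genuine violations of $s$-monotonicity rather than merely of $(3s-2)$-monotonicity: each tuple $y_1 \prec \cdots \prec y_{3s-1}$ has a "1" at the bottom and flips $3s-2$ times, which exceeds $s$ since $3s - 2 \geq s$ for $s \geq 1$; actually one should be slightly more careful and note $3s - 2 > s$ for $s \geq 2$, so that an $s$-monotone function on that chain must "fix" at least $(3s-2) - s = 2s-2$ of the points (or, being a touch more conservative as in \autoref{claim:bb:3k:blocks:far:k-mon}, at least $\Omega(s)$ of them). Concretely: restricted to any chain, an $s$-monotone function alternates at most $s$ times while $f$ alternates $3s - 2$ times on that tuple, so the two functions must disagree on at least $(3s-2) - s = 2s - 2 \geq s$ of the tuple's $3s-1$ vertices. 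Summing the disagreements over the matching $M$, any $s$-monotone function $g$ satisfies
\[
2^n \cdot d(f, g) \;\geq\; |M| \cdot s \;\geq\; |\operatorname{MID}_L|\cdot \frac{(1 - o(1)) 2^{n_R}}{3s} \cdot s \;=\; (1-o(1))\cdot\frac{|\operatorname{MID}_L| \cdot 2^{n_R}}{3}.
\]

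To finish I need a lower bound on $|\operatorname{MID}_L|$, the number of $x \in \{0,1\}^{n_L}$ with $\big||x| - \tfrac{n_L}{2}\big| \leq \tfrac{\sqrt{n_L}}{100}$. By standard anticoncentration (Fact~\ref{fact:all-points-in-middle} gives the complementary statement; here I want a lower bound, which follows from the same binomial estimates, or from a Berry–Esseen / normal-approximation argument), a constant fraction — say at least $c_0 \cdot 2^{n_L}$ for an absolute constant $c_0 > 0$ — of the points of $\{0,1\}^{n_L}$ lie within $O(\sqrt{n_L})$ of the middle level. Plugging $|\operatorname{MID}_L| \geq c_0 2^{n_L}$ and $2^{n_L} \cdot 2^{n_R} = 2^n$ into the inequality above yields $d(f, g) \geq (1-o(1)) c_0/3 = \Omega(1)$, uniformly over all $s$-monotone $g$, which is exactly the claim. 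I expect the main obstacle to be nailing down this constant-probability anticoncentration lower bound for the middle levels cleanly — the window $\tfrac{\sqrt{n_L}}{100}$ is narrow enough that one must check the normal-approximation constants actually give a positive fraction (the interval covers roughly $1/50$ of a standard deviation, so the fraction is small but bounded away from $0$); everything else is bookkeeping that parallels \autoref{claim:bb:3k:blocks:far:k-mon}.
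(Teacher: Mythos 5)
Your proposal is correct and follows essentially the same route as the paper: restrict to the subcubes $H_x$ with $|x|\in\operatorname{MID}_L$, invoke \autoref{claim:bb:matching:size} with $\ell=3s$ to get a violated matching of size $\Omega(2^{n_R}/s)$ in each, take the (automatically vertex-disjoint) union over $x$, and charge $\Omega(s)$ disagreements to each matched tuple. One small quantitative nit: flipping a single point can remove two alternations, so a tuple with $3s-2$ alternations forces only about $s-1$ (not $2s-2$) disagreements with any $s$-monotone function --- but this is still $\Omega(s)$, as your own hedge acknowledges, so the conclusion is unaffected.
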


\begin{proof}

By Fact \ref{fact:all-points-in-middle}, picking  $\eps=\frac{1}{3}$, say, it follows that for a constant fraction of the points $x \in \{0,1\}^{n_L}$,  the function $f$ restricted to the cube $H_x$ is a  balanced blocks function on $3s$ blocks. 
 By Claim \ref{claim:bb:matching:size}, there is a matching of violations to $(3s-2)$-monotonicity within the violation graph on $H_x$, of size at least $\Omega(1) \cdot \frac{2^{n_R}}{3s}$. It follows that the there is a matching of violations to $(3s-2)$-monotonicity on the whole domain $\{0,1\}^n$ of size at least $\Omega(1) \frac{2^n}{3s}$. As in the proof of Claim \ref{claim:bb:3k:blocks:far:k-mon}, to produce a function that is $s$-monotone, one must change the value of $f$ in at least $s$ points of each matched hyper-edge. It follows that $f$ is $\Omega(1)$-far from being $s$-monotone.

\end{proof}

Our hard family of functions is the orbit of the function $ f_{n_L}$ under all the permutations of the variables.

\begin{definition} \label{def:hard:family}
		The family $\cl{F}_{k,s}$, parameterized by $k$ and $s$ is defined as follows.  Setting $n_L = (1 - \frac{k^2}{625s^2})n$, we define

		  $$\cl{F}_{k,s} \eqdef \setOfSuchThat{f_{n_L} \circ \pi_\sigma}{\sigma \in S_n}$$ 
		where $\pi_\sigma : \{0,1\}^n \to \{0,1\}^n$ is a permutation that sends the string $\{(a_1,a_2,\ldots,a_n)\}$ to $\{a_{\sigma(1)},a_{\sigma(2)},\ldots,a_\sigma(n)\}$ for a permutation $\sigma : [n] \to [n]$.	We omit the parameters $k$ and $s$ if it is clear from the context.\\
\end{definition}

We now observe that these functions are indeed far from being $s$-monotone. This follows since  $s$-monotonicity is closed under permutation of the variables, and by Claim \ref{claim:one-f-is-far-from-s-mon}.

\begin{claim} \label{claim:f-is-far-from-s-mon}
		Every $f \in \cl{F}_{k,s}$, $f$ is $\Omega(1)$-far from $s$-monotone.
\end{claim}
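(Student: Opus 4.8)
The plan is to reduce this immediately to \autoref{claim:one-f-is-far-from-s-mon}, which already establishes that the single function $f_{n_L}$ is $\Omega(1)$-far from $s$-monotone, together with the observation that $s$-monotonicity is preserved under relabeling of coordinates. First I would fix an arbitrary $f \in \cl{F}_{k,s}$; by \autoref{def:hard:family} we may write $f = f_{n_L} \circ \pi_\sigma$ for some $\sigma \in S_n$, where $\pi_\sigma$ is the coordinate permutation of the hypercube. The goal is to transfer the distance lower bound from $f_{n_L}$ to $f$.

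The key structural point is that $\pi_\sigma \colon \{0,1\}^n \to \{0,1\}^n$ is a bijection that preserves the partial order $\prec$ (it just relabels coordinates, so it maps chains to chains of the same length), and it is measure-preserving with respect to the uniform distribution on $\{0,1\}^n$. From the order-preserving property it follows that $g$ is $s$-monotone if and only if $g \circ \pi_\sigma$ is $s$-monotone: a violation $x_1 \prec \cdots \prec x_{s+1}$ for $g \circ \pi_\sigma$ maps under $\pi_\sigma$ to a violation $\pi_\sigma(x_1) \prec \cdots \prec \pi_\sigma(x_{s+1})$ for $g$, and conversely via $\pi_\sigma^{-1} = \pi_{\sigma^{-1}}$. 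Hence $\mathcal{M}_s$ is closed under the map $g \mapsto g \circ \pi_\sigma$, i.e. $h \in \mathcal{M}_s \iff h \circ \pi_\sigma \in \mathcal{M}_s$.

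Next I would use the measure-preserving property to see that distances are unchanged under this relabeling: for any two functions $g_1, g_2$ we have $d(g_1 \circ \pi_\sigma, g_2 \circ \pi_\sigma) = d(g_1, g_2)$, since $\pi_\sigma$ is a bijection on $\{0,1\}^n$ and the normalized Hamming distance is just the fraction of points on which the functions disagree. Combining these two facts, for any $h \in \mathcal{M}_s$ we have $h \circ \pi_\sigma \in \mathcal{M}_s$ and $d(f, h \circ \pi_\sigma) = d(f_{n_L} \circ \pi_\sigma, h \circ \pi_\sigma) = d(f_{n_L}, h)$; taking the infimum over $h \in \mathcal{M}_s$ (equivalently over $h \circ \pi_\sigma$ ranging over all of $\mathcal{M}_s$) gives $d(f, \mathcal{M}_s) = d(f_{n_L}, \mathcal{M}_s) \geq \Omega(1)$ by \autoref{claim:one-f-is-far-from-s-mon}. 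Since $f$ was arbitrary, this proves the claim. There is no real obstacle here — the only thing to be slightly careful about is the bookkeeping that the minimization over $\mathcal{M}_s$ is genuinely unaffected because $h \mapsto h \circ \pi_\sigma$ is a bijection of $\mathcal{M}_s$ onto itself; everything else is a one-line consequence of $\pi_\sigma$ being an order-preserving, measure-preserving bijection.
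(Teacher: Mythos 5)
Your proposal is correct and follows exactly the route the paper takes: the paper's proof is the one-line observation that $s$-monotonicity is closed under permutation of the variables together with \autoref{claim:one-f-is-far-from-s-mon}, and your argument simply spells out the details (that $\pi_\sigma$ is an order-preserving, measure-preserving bijection, hence preserves both membership in $\mathcal{M}_s$ and Hamming distance).
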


Therefore, we proved the distance property from Theorem \ref{thm:lb:k-vs-g(k,n):chain-tester}

\subsubsection{The Hard Family vs. the Chain Tester} \label{sec:hard:family:chain-tester}

Recalling that the basic chain tester picks a uniformly random chain in $\{0,1\}^n$, note that the distribution of the queries chosen by the chain tester is unchanged over permutations of the variables.  Thus, it suffices to analyze the probability that the chain tester discovers a violation to $k$-monotonicity on $f_{n_L}$.  We will show that this probability is very small if the quantity $s/k$ is small.

\begin{claim} \label{claim:k-vs-g(k,n):chain-tester}
		There exists a constant $C > 0$, such that the probability that a random chain reveals a violation to $k$-monotonicity in $f_{n_L}$ is at most $\exp\left(-C \frac{k^2}{s^2} \sqrt{n}\right)$.
\end{claim}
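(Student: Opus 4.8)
The plan is to analyze a uniformly random maximal chain $\bZ = \langle 0^n \prec \bz_1 \prec \cdots \prec \bz_{n-1} \prec 1^n\rangle$ and to bound the probability that it reveals a violation to $k$-monotonicity in $f_{n_L}$. Recall $f_{n_L}(x,y)$ is supported (nonconstant) only on the ``middle slab'' $|x|\in\operatorname{MID}_L$, where it equals $\operatorname{BB}(n_R,3s)(y)$. A random chain induces a random chain on the $L$-coordinates and, simultaneously, the restriction of the chain to the $R$-coordinates climbs from $0^{n_R}$ to $1^{n_R}$ but does so interleaved with the $L$-moves. The key observation is that a violation to $k$-monotonicity within $f_{n_L}$ requires seeing $k+1$ alternations of the function value along the chain, and since $f_{n_L}$ is only nonconstant while $|x|\in\operatorname{MID}_L$, all these alternations must occur while the chain's $x$-part stays inside $\operatorname{MID}_L$. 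During that window the $y$-part of the chain moves monotonically through the blocks $I_1,\dots,I_{3s}$ of $\operatorname{BB}(n_R,3s)$, so to accumulate $k$ alternations the chain must cross at least $k$ block boundaries of the balanced-blocks function while $|x|$ is in $\operatorname{MID}_L$.

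**Main steps.**
First I would fix the geometry: $\operatorname{MID}_L$ has width $2\sqrt{n_L}/100$, so the chain spends only about $O(\sqrt{n_L})$ steps with $|x|\in\operatorname{MID}_L$ (more precisely, the number of indices $j$ with $|\bz_j$ restricted to $L| \in \operatorname{MID}_L$ is at most $\frac{2\sqrt{n_L}}{100}+1$, deterministically). Second, I would couple the random maximal chain with a process that, at each step, flips a coordinate chosen uniformly among the remaining zero-coordinates; conditioned on the $x$-part being in the middle window, each step is an $R$-step with probability roughly $\frac{\text{(remaining $R$-zeros)}}{\text{(remaining zeros)}} \le \frac{n_R}{n/2} = \Theta(1/r^2)$, since during the relevant window at least $\sim n/2$ coordinates are still zero. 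Hence the number $T$ of $R$-coordinate flips occurring while $|x|\in\operatorname{MID}_L$ is stochastically dominated by a sum of $O(\sqrt{n_L})$ independent Bernoulli$(\Theta(1/r^2))$ variables, so $\expect{T} = O(\sqrt n / r^2)$. Third — the crucial counting step — I would argue that to witness $k$ alternations of $\operatorname{BB}(n_R,3s)$ along the monotone $R$-chain, one needs the $R$-chain to traverse at least $k$ of the block boundaries \emph{during} the middle window; since the blocks $I_i$ each occupy $\Theta(n_R/s) = \Theta(\sqrt n / (r^2 s)) = \Theta(\sqrt n / (rs^2/k)\cdot\ldots)$ — more cleanly, since the smallest block has width $\Omega(n_R/s)$ and the chain moves one $R$-coordinate per $R$-step, crossing $k$ boundaries requires the $R$-part to advance its weight by at least $(k-1)\cdot\Omega(n_R/s)$, i.e. requires $T \ge \Omega(k n_R/s) = \Omega(k \cdot \frac{n}{r^2 s}) = \Omega(\frac{\sqrt n \cdot k}{r^2 s}\cdot \frac{\sqrt n}{1})$; plug in $r=s/k$ to get the requirement $T = \Omega(\frac{n k}{s^2}\cdot\frac{k}{s}\cdot\ldots)$. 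I will need to be careful to get the exponents right here, but the shape is: the event ``violation found'' is contained in the event $\{T \ge c\, k\, n_R/s\}$ for a suitable constant $c$, and $\expect{T}$ is smaller than this threshold by a factor of $\Theta(s/k\cdot\ldots)$, in fact by a polynomial factor in $n$ once $s = o(\sqrt n)$.

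**Finishing with a tail bound.**
Having reduced to bounding $\probaOf{T \ge \tau}$ where $T$ is dominated by a sum of $m = O(\sqrt{n_L})$ independent Bernoullis with mean $p = \Theta(1/r^2)$ and $\tau = \Theta(k n_R / s)$ is a constant factor times $mp$'s natural scale — and in fact $\tau / \expect{T} = \Theta(s/k) = \Theta(r) = \omega(1)$ when $s = o(\sqrt n)$ and appropriately $\tau \gg \expect{T}$ — I would apply a multiplicative Chernoff bound: $\probaOf{T \ge \tau} \le \exp(-\Omega(\tau))$ once $\tau \ge 2\expect{T}$, giving $\probaOf{\text{violation}} \le \exp(-\Omega(k n_R / s)) = \exp(-\Omega(\frac{k}{s}\cdot\frac{n}{r^2})) = \exp(-\Omega(\frac{k}{s}\cdot\frac{n k^2}{s^2}))$; reconciling this with the claimed bound $\exp(-C \frac{k^2}{s^2}\sqrt n)$ means I should instead track the weaker but correct threshold — the window itself has length only $\Theta(\sqrt n)$, so even $T$ can be at most $\Theta(\sqrt n)$, and the right statement is that crossing $\Omega(k)$ boundaries forces $T = \Omega(k n_R/s)$ \emph{but also} forces at least that many of the $\Theta(\sqrt n)$ window-steps to be $R$-steps, and a Chernoff bound on $m=\Theta(\sqrt n)$ trials at rate $\Theta(k^2/s^2)$ yields exactly $\exp(-\Omega(\frac{k^2}{s^2}\sqrt n))$. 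The main obstacle, and where I would spend the most care, is step three: making rigorous that a violation genuinely forces the $R$-chain to cross $\Omega(k)$ block boundaries \emph{within the middle window} — one must rule out ``reusing'' alternations from outside the window (impossible since $f$ is constant $0$ there) and handle the parity/endpoint conventions in the definition of $f_{n_L}$ for even versus odd $k$ — and then correctly matching the combinatorial threshold to the Chernoff exponent to land on $\exp(-C\frac{k^2}{s^2}\sqrt n)$.
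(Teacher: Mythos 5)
Your overall strategy matches the paper's: reduce ``the chain finds a violation'' to ``the chain's $R$-part traverses many Hamming levels of $\{0,1\}^{n_R}$ while its $L$-part sits in $\operatorname{MID}_L$'', and then show this traversal is an unlikely deviation for the number of $R$-flips in that window. However, there is a genuine gap in the key quantitative step: the lower bound on how many $R$-levels a violating chain must cross. You take the width of a block of $\operatorname{BB}(n_R,3s)$ to be $\Omega(n_R/s)$ levels, but the blocks are balanced in \emph{measure} (each holds about $2^{n_R}/(3s)$ points), and since a single Hamming level holds at most $O(2^{n_R}/\sqrt{n_R})$ points, the correct lower bound on levels per block is only $\Omega(\sqrt{n_R}/s)$ (tight for middle blocks). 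The correct threshold is therefore $\tau=\Theta(k\sqrt{n_R}/s)=\Theta(\sqrt{n}\,k^{2}/s^{2})$, which is only a \emph{constant factor} above the expected number of $R$-flips in the window, itself $\Theta(\sqrt{n}/r^{2})=\Theta(\sqrt{n}\,k^{2}/s^{2})$. Your claimed ratio $\tau/\expect{T}=\Theta(s/k)=\omega(1)$ is doubly wrong ($s/k$ can equal $1$, and it is not the right ratio anyway), so the entire argument hinges on checking that the constant in the threshold strictly exceeds the constant in the mean --- this is precisely why the construction fixes $n_R=n/(625r^{2})$ and gives $\operatorname{MID}_L$ width $\sqrt{n_L}/50$, and why the paper verifies $\expect{\bX}\le\frac{4}{5}t$ before invoking a $\frac{5}{4}$-multiplicative hypergeometric Chernoff bound. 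Your ``reconciliation'' asserts the right final exponent but never derives it.

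A second, more technical flaw: the number of chain steps with $|x|\in\operatorname{MID}_L$ is \emph{not} deterministically $O(\sqrt{n_L})$. Each $R$-flip inside the window also costs a step, so the window length is $\sqrt{n_L}/50+|V(\bZ)|-1$, a random quantity depending on exactly what you are trying to bound; consequently the claimed stochastic domination of $T$ by a sum of $O(\sqrt{n_L})$ Bernoullis is false as stated (that sum is bounded by $O(\sqrt{n_L})$ while $T$ need not be). The paper sidesteps this by examining a \emph{fixed} number $\sqrt{n}/3$ of steps after the chain first enters the window: if fewer than $t$ of these are $R$-flips, then more than $\sqrt{n_L}/50$ are $L$-flips, so the window has already closed and $|V(\bZ)|<t$; the number of $R$-flips among these $\sqrt{n}/3$ draws is genuinely hypergeometric, and the tail bound finishes the proof. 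Your argument is repairable along these lines, but as written both the combinatorial threshold and the concentration step contain errors.
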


\newcommand{\MID}{\operatorname{MID}}
Let $Z$ be a fixed chain $0^n = z_0 \prec z_1 \prec z_2 \prec \cdots \prec z_n = 1^n$ in $\{0,1\}^n$.  Note that $f_{n_L}(z_i) = f_{n_L}(x_i, y_i) = 0$ if $|x_i| \notin \MID_L$.  Thus, if there is a violation to $k$-monotonicity in $Z$, then it can be found among points in $Z$ where $|x_i| \in \MID_L$.  Thus, a chain can only exhibit a violation on points $z_i = (x_i,y_i)$ where $n_L/2 - \sqrt{n_L}/100 \leq |x_i| \leq n_L/2 + \sqrt{n_L}/100$.  By definition, regardless of the exact structure of $x_i$ in this interval, $f_{n_L}(x_i,y_i) = \BB(n_R,3s)(y_i)$.  Since $\BB$ is a totally symmetric function, to determine if $Z$ exhibits a violation, it is enough to analyze the set

\[
		V(Z) \eqdef \{j : \text{there exists } z_i = (x_i,y_i) \in Z \text{ such that } |x_i| \in \MID_L \text{ and } |y_i| = j\}.
\]
We remark that for every chain $Z$, $V(Z)$ is a set of consecutive integers.

\begin{claim} \label{claim:violations-imply-large-v(z)}
		Suppose $Z$ contains a violation to $k$-monotonicity.  Then $|V(Z)| \geq k\sqrt{n_R}/(16s)$.
\end{claim}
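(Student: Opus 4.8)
The plan is to show that if a chain $Z$ contains a violation to $k$-monotonicity, then the violation must ``live'' entirely among points $z_i = (x_i, y_i)$ with $|x_i| \in \MID_L$, and that realizing $k$ alternations of $\BB(n_R, 3s)$ along such a chain forces the $y$-coordinate weight to traverse at least $k$ of the $3s$ intervals $I_1, \dots, I_{3s}$ of the balanced-blocks partition. Since each such interval has length $\Theta(n_R/(3s))$, this will give the claimed lower bound on $|V(Z)|$.

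First I would record the key structural observation already set up before the claim: $f_{n_L}(z_i) = 0$ whenever $|x_i| \notin \MID_L$, so any violating subsequence $z_{i_1} \prec \dots \prec z_{i_{k+1}}$ of $Z$ (with $f_{n_L}(z_{i_1}) = 1$ and alternating values thereafter) must have $|x_{i_j}| \in \MID_L$ for every $j$ — indeed $f_{n_L}(z_{i_1}) = 1 \neq 0$ rules out $|x_{i_1}| \notin \MID_L$, and then no value in the alternating sequence can be attained outside $\MID_L$ without breaking the alternation (two consecutive equal $0$'s). Hence for each $j$ we have $f_{n_L}(z_{i_j}) = \BB(n_R, 3s)(y_{i_j})$, and the sequence $\BB(n_R,3s)(y_{i_1}), \dots, \BB(n_R,3s)(y_{i_{k+1}})$ starts at $1$ and alternates $k$ times. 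Since the chain is increasing, $|y_{i_1}| \leq |y_{i_2}| \leq \dots \leq |y_{i_{k+1}}|$, and since $\BB(n_R, 3s)$ is totally symmetric and blockwise constant on the intervals $I_1, \dots, I_{3s}$, each alternation of the value forces $|y_{i_j}|$ and $|y_{i_{j+1}}|$ to lie in \emph{different} intervals $I_t$. Thus the weights $|y_{i_1}|, \dots, |y_{i_{k+1}}|$ occupy at least $k+1$ distinct intervals among the $I_t$'s, and in particular the integers $|y_{i_1}|, |y_{i_1}|+1, \dots, |y_{i_{k+1}}|$ span at least $k-1$ \emph{complete} intervals $I_t$ strictly between the first and last (being conservative: the first and last intervals may only be partially covered). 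All of these weights, and all integers between them, belong to $V(Z)$ by definition and since $V(Z)$ is a set of consecutive integers.

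To finish, I would lower-bound the total length of $k$ (or $k-1$, after the conservative accounting) consecutive intervals $I_t$. By Definition~\ref{def:balanced:block}, each block $B_i$ of $\BB(n_R, 3s)$ has $|B_i| \geq (1 - 3s/\sqrt{n_R})\, 2^{n_R}/(3s)$; comparing this to the maximum slice size $\binom{n_R}{\lfloor n_R/2\rfloor} < 2 \cdot 2^{n_R}/\sqrt{n_R}$ from \autoref{fact:heavy-slice}, each interval $I_t$ must contain at least roughly $\frac{(1 - 3s/\sqrt{n_R})\sqrt{n_R}}{6s} = \Omega(\sqrt{n_R}/s)$ Hamming levels. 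Summing over the $\Omega(k)$ complete intervals spanned gives $|V(Z)| = \Omega(k\sqrt{n_R}/s)$, and I would track the constants carefully (using $s = g(k,n) = o(\sqrt n)$, so also $s = o(\sqrt{n_R})$, to absorb the $3s/\sqrt{n_R}$ factor) to land exactly on the stated $k\sqrt{n_R}/(16s)$.

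The main obstacle is the boundary bookkeeping: an increasing chain can enter $\MID_L$ late in its $y$-trajectory, so the first violating point $z_{i_1}$ need not have small $|y|$, and the argument must only count the \emph{gap} in $y$-weight between the first and last violating points, not the whole range. I expect the delicate part to be verifying that $k$ alternations genuinely require traversing $\Omega(k)$ full blocks (not merely $\Omega(k)$ block boundaries, which could in principle be crossed within a short span if blocks were short — but they are not, precisely by the size lower bound) and then pinning down the constant $16$ rather than some other absolute constant. This is a routine but careful computation combining the block-size bound with \autoref{fact:heavy-slice}.
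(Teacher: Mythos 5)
Your proposal is correct and follows essentially the same route as the paper: lower-bound the number of Hamming levels per block by combining the block-mass lower bound with the maximum slice size from \autoref{fact:heavy-slice} (obtaining $|I_i| \geq \Omega(\sqrt{n_R}/s)$, the paper gets $\sqrt{n_R}/(8s)$), then observe that a violation forces the chain to span $k-1$ complete blocks, so $|V(Z)| \geq (k-1)\sqrt{n_R}/(8s) \geq k\sqrt{n_R}/(16s)$ using $k \geq 2$. Your additional bookkeeping about the violation living inside $\MID_L$ and about counting only complete interior intervals is exactly the (implicit) content of the paper's one-line argument.
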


\begin{proof}
		By Fact ~\ref{fact:heavy-slice}, the maximum value of $Pr_{\by \sim \{0,1\}^{n_R}}[|\by| = t]$ over values of $t$ is $2/\sqrt{n_R}$.  Since $BB(n_R,3s)$ has $3s$ blocks, the number of consecutive levels of $I_i$ in any block $B_i$ must satisfy

		\[
				(2/\sqrt{n_R})|I_i| \geq \frac{1}{3s} (1 - o(1)) \geq \frac{1}{4s},
		\]
		so $|I_i| \geq \sqrt{n_R}/(8s)$.  To see a violation to $k$-monotonicity, the chain $Z$ must contain points from each Hamming level in $k-1$ complete blocks, so this requires $|V(Z)| \geq (k-1)\sqrt{n_R}/(8s) \geq k\sqrt{n_R}/(16s),$ as claimed.

\end{proof}	

We will show that that $|V(\bZ)|$ reaching this value is very unlikely for a random chain $\bZ$.
Let $\bZ$ be a random chain $0^n \prec \bz_1 \prec \cdots \prec \bz_{n-1} \prec 1^n$. 

\begin{proofof} {Claim \ref{claim:k-vs-g(k,n):chain-tester}}
	The proof follows  from Claim \ref{claim:violations-imply-large-v(z)} and the 
	 following claim.

		\begin{claim} \label{claim:random:chain:low:prob:violation}
				Let $\bZ$ be a random chain.  Then $\Pr[|V(\bZ)| \geq k\sqrt{n_R}/(20s)] \leq \exp\left(-\dfrac{0.00009}{r^2}\sqrt{n}\right)$.
		
		\end{claim}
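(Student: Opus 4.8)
The plan is to pass to the ``random permutation'' description of a uniformly random maximal chain and to reduce $|V(\bZ)|$ to a waiting-time random variable that is easy to bound. Sampling $\bZ = \langle 0^n \prec \bz_1 \prec \cdots \prec \bz_{n-1} \prec 1^n\rangle$ uniformly is the same as drawing a uniformly random ordering of the $n$ coordinates and letting $\bz_i$ be the indicator of the first $i$ coordinates revealed; write $\bz_i = (\bx_i, \by_i)$ so that $|\bx_i|$ (resp.\ $|\by_i|$) counts the $L$-coordinates (resp.\ $R$-coordinates) among the first $i$ revealed. Since $|\bx_i|$ is nondecreasing and $\MID_L$ is an interval of $\approx \sqrt{n_L}/50$ consecutive weights, the set of times $i$ with $|\bx_i| \in \MID_L$ is a window $[i_0, i_1]$ over which $|\bx_i|$ increases by exactly $w := \sqrt{n_L}/50$ (ignoring rounding); and because $|\by_i|$ changes by $0$ or $1$ at each step, $V(\bZ) = \{\,|\by_{i_0}|, |\by_{i_0}|+1, \ldots, |\by_{i_1}|\,\}$. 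Hence $|V(\bZ)| = 1 + \big(|\by_{i_1}| - |\by_{i_0}|\big)$ is $1$ plus the number of $R$-coordinates revealed strictly inside the window $[i_0+1, i_1]$.

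Next I would rewrite that count using the ``gaps'' of the reveal order. Let $g_0, g_1, \ldots, g_{n_L}$ denote the numbers of $R$-coordinates appearing before the first revealed $L$-coordinate, between the first and second, and so on; these $n_L + 1$ values form a uniformly random composition of $n_R$, so they are exchangeable. The window $[i_0+1, i_1]$ runs from just after the $m$-th revealed $L$-coordinate to the $M$-th, with $m \approx n_L/2 - \sqrt{n_L}/100$ and $M \approx n_L/2 + \sqrt{n_L}/100$, so $M - m = w$ and $|V(\bZ)| - 1 = \sum_{i=m}^{M-1} g_i$. By exchangeability this has the same law as $W := \sum_{i=0}^{w-1} g_i$, i.e.\ the number of $R$-coordinates revealed before the $w$-th revealed $L$-coordinate in a uniformly random interleaving of $n_L$ symbols $L$ and $n_R$ symbols $R$. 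Since $\tfrac{k\sqrt{n_R}}{20s} = \tfrac{\sqrt n}{500 r^2}$, it remains to show $\Pr\!\big[W \ge \tfrac{\sqrt n}{500 r^2}\big] \le \exp\!\big(-\tfrac{0.00009}{r^2}\sqrt n\big)$.

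For this I would use stochastic domination followed by an optimized exponential moment bound. At every step before the $w$-th $L$-symbol is revealed there remain at least $n_L - w$ $L$-symbols, hence the conditional probability that the next revealed symbol is an $R$ is at most $q := \tfrac{n_R}{n - w} = O(1/r^2)$ (using $n_L \ge (1 - \tfrac{1}{625})n$ and $w \ll n$). Therefore $W$ is stochastically dominated by a sum of $w$ i.i.d.\ $\geom{1-q}$ variables (a negative binomial), whose mean is $\approx wq = O(\sqrt n/r^2)$ — a constant factor below the target threshold $\tfrac{\sqrt n}{500 r^2}$. A generic Chernoff statement is not quite enough here because that constant factor is only moderate, so I would instead bound $\Pr[W \ge K] \le \big(\tfrac{1-q}{1 - qc}\big)^{w} c^{-K}$ for $1 < c < 1/q$ and take the choice $c \approx K/(wq) = \Theta(1)$ (a constant of order $30$--$60$) that roughly optimizes the right-hand side; this keeps $qc \approx K/w \le \tfrac{1}{10 r^2}$ small, so the first factor is at most $\exp(O(w/r^2)) = \exp(O(\sqrt n/r^2))$ while $c^{-K} = \exp(-K \ln c) = \exp(-\Omega(\sqrt n/r^2))$, and the explicit constants combine to beat $0.00009$ with room to spare. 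The only real care needed is in this last optimization and in the attendant bookkeeping — the rounding in the definition of $\MID_L$, replacing $n_L$ by $n$ in lower-order terms, assuming $n$ large enough — together with separately dispatching the degenerate range $r = \Omega(n^{1/4})$ (where $n_R = O(\sqrt n)$ and the target $\tfrac{\sqrt n}{500 r^2}$ is $O(1)$), in which the claimed bound is essentially trivial and follows directly from $|V(\bZ)| - 1 \le W$ and a crude estimate.
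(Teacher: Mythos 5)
Your proposal is correct, and it reaches the stated bound (in fact with a much better constant, roughly $\exp(-0.006\sqrt n/r^2)$ versus the claimed $0.00009$) by a genuinely different technical route. The paper conditions on the time $j$ at which the chain first enters $\MID_L$, looks at the next $\sqrt n/3$ coordinate flips, argues that either fewer than $t$ of them land in $R$ (in which case the chain has already crossed all of $\MID_L$ and $|V(\bZ)|<t$) or at least $t$ do, and then bounds the number of $R$-flips among a fixed number $\sqrt n/3$ of draws as a hypergeometric variable, invoking a black-box Chernoff bound for hypergeometric distributions (after taking the worst case $|\by_j|=0$). You instead condition on the number of $L$-flips needed to cross $\MID_L$ (exactly $w\approx\sqrt{n_L}/50$ of them), express the interleaved $R$-flips via the exchangeable gap vector of the random interleaving, dominate the resulting waiting-time count by a negative binomial, and optimize the exponential moment by hand. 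These are dual parametrizations of the same event (fixed number of draws versus fixed number of $L$-successes); your version buys a cleaner reduction --- it avoids the paper's auxiliary argument that the truncation at $\sqrt n/3$ steps is harmless and the implicit monotonicity in $|\by_j|$ needed to justify the worst case $|\by_j|=0$ --- at the price of carrying out the stochastic-domination coupling and the MGF optimization yourself rather than citing a ready-made tail bound. Two small bookkeeping points: your gap sum should run over $g_m,\dots,g_M$ (that is, $w+1$ gaps rather than $w$, since the levels with $|\bx_i|=M$ still lie in $\MID_L$), which changes nothing asymptotically; and in the degenerate range where $r^2\gtrsim\sqrt n$ the threshold $k\sqrt{n_R}/(20s)$ drops below $1$ and the claim as literally stated cannot hold (since $|V(\bZ)|\ge 1$ always), so that regime is a defect of the claim itself rather than something your ``crude estimate'' can rescue --- but the theorem's conclusion is vacuous there anyway, so this does not affect the result.
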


		\begin{proof}
				Let $j$ be the smallest index such that $\bz_j = (\bx_j,\by_j)$ satisfies $|\bx_j| = n_L/2 - \sqrt{n_L}/100$.  This is the index where the chain enters the region where it could find violations.
				
				Let $w$ be the largest index such that $\bz_w = (\bx_w,\by_w)$ satisfies $|\bx_w| = n_L/2 + \sqrt{n_L}/100$.  If $\bZ$ contains a violation to $k$-monotonicity, then it must occur on the subchain $\bz_{j-1} \prec \bz_j \prec \cdots \prec \bz_w \prec \bz_{w+1}$.  By construction, we have $f(\bz_\ell) = 0$ if $\ell \leq j-1$ or $\ell \geq w+1$.  Further, $V(\bZ) = \{|\by_j|, |\by_{j+1}|, \ldots, |\by_{w-1}|, |\by_w|\}$, and $|V(\bZ)| = |\by_w| - |\by_j| + 1$.  Thus, to prove the claim, it satisfies to analyze $|\by_w| - |\by_j|$.  Note $w-j$ is exactly $\sqrt{n_L}/50 + |V(\bZ)| - 1$; this accounts for $\sqrt{n_L}/50$ flips of variables in $L$ and $|V(\bZ)| - 1$ flips of variables in $R$.  Informally, we want to show that the ratio of the number of variables flipped in $L$ to number of variables flipped in $R$ is, with very high probability, too small for the chain tester to succeed in finding a violation to $k$-monotonicity.
				
				To simplify our analysis, we will not work directly with $w$.  Instead, define $j$ as above, but consider $\bz_{j'}=(\bx_{j'}, \by_{j'}),$ where $j' = j + \sqrt{n}/3$.  We will show that, with high probability, $j' \geq w$, and $|\by_{j'}| - |\by_j|$ (and thus $|\by_w| - |\by_j|$) is small. 
				
				We claim that the value of $|\by_{j'}| - |\by_j|$ is a random variable with a (random) hypergeometric distribution.  Indeed, to draw a random variable distributed as this difference, we construct the following experiment that simulates the behavior of a random chain with respect to the function $f$:

				\begin{itemize}
						\item The chain tester picks $\sqrt{n}/3$ coordinates from the $n-|\bz_j|$ coordinates set to $0$.  
						\item $n_R-|\by_j|$ of these coordinates are ``successes'' for the chain tester, which correspond to flipping variables in $R$, and
						\item $n_L-|\bx_j|$ of these coordinates are ``failures'' for the chain tester, which correspond to flipping variables in $L$. 

				\end{itemize}

			Let $H(u,N,t,i)$ denote the probability of seeing exactly $i$ successes in $t$ independent samples, drawn uniformly and without replacement, from a population of $N$ objects containing exactly $u$ successes.
		
				The chain tester is most likely to see successes in the above experiment if $|\by_j| = 0$; we will assume that this happens. As seen in the proof of Claim ~\ref{claim:violations-imply-large-v(z)},  in order  to successfully reject $f$, the chain  must witness at least $\frac{k \sqrt {n_R}}{16s}$ successes. 	Let $t = \frac{k\sqrt{n_R}}{20s}$.
			
			Note that if the number of successes is $i < t$, then the number of failures is at least $\frac{\sqrt n}{3} - t \geq \frac{\sqrt n}{3} - \frac{\sqrt n}{500} > \frac{\sqrt{n_L}}{50}$, and so in this case we have $|V(\bZ)| < t$; this corresponds to the chain missing a complete balanced block. It follows that the proof reduces to upper bounding the quantity 
			$$\Pr[|V(\bZ)| \geq t] = \displaystyle\sum_{i \geq t} H(n_R,n_L/2+\sqrt{n_L}/100+n_R,\sqrt{n}/3,i).$$	
		
		We analyze the above quantity using a Chernoff bound for hypergeometric random variables, where $\bX=|\by_{j'}| - |\by_j|$.

				\begin{theorem}[Theorem~1.17 in~\cite{Auger2011}]
						Let $\bX$ be a hypergeometrically distributed random variable.  Then
						\[
								\Pr[\bX \geq \frac54 \cdot \expect{\bX}] \leq \exp(-\expect{\bX}/48).
						\]
				\end{theorem}
				\newcommand{\E}{\expect}
				
				We use the following claim.
				
				\begin{claim}
				$\frac4{15} \cdot t \leq \E{\bX}\leq  \frac45 \cdot t$.
				
				\end{claim}
				
				\begin{proof}
				
				Standard facts about the hypergeometric distribution imply that
						$$\E{\bX}=\frac{\sqrt n}{3} \cdot \dfrac{n_R}{n_L/2+\sqrt{n_L}/100+n_R}.$$

				Recall that $r=s/k\geq 1$, $n_L=n(1-1/(625 r^2)) > 2n/3$, $n_R=n/(625 r^2)$, and  $t = \frac{k\sqrt n_R}{20s}=\frac{\sqrt{n_R}}{20r} = \frac{\sqrt{n}}{500r^2}$.
				It follows that $$n_L/2+\sqrt{n_L}/100+n_R > n_L/2 > n/3.$$
				Therefore
				$$\E{\bX}<\frac{\sqrt n}{3} \cdot \frac{3 n_R}{n}=\sqrt{n}\cdot \frac{1}{625r^2}=\frac{4}{5}\frac{\sqrt{n}}{500r^2} = \frac{4}{5} \cdot t.$$
				Since $n_L/2 + \sqrt{n_L}/100 + n_R < n$,
				 
				$$\E{\bX}>\frac{\sqrt n}{3} \cdot \frac{n_R}{n}=\frac{\sqrt{n}}{3} \cdot \frac{1}{625r^2}=\frac{4}{15}\frac{\sqrt{n}}{500r^2} = \frac{4}{15}\cdot t.$$
				
				\end{proof}
			We can now conclude that  			
				\begin{eqnarray*}
				\Pr[\bX\geq t]&=&\Pr[|V(\bZ)| \geq t] = \displaystyle\sum_{i \geq t} H(n_R,n_L/2+\sqrt{n_L}/100,\sqrt{n}/3,i)\\
				                  &=& \Pr\left[X\geq \E{\bX}\cdot \frac{t}{\E{\bX}}\right]\\
				                  &\leq& \Pr\left[X\geq \E{\bX}\cdot \frac{5}{4}\right]\\
				                     &\leq& \exp(-\E{\bX}/48)\\
				                     &\leq& \exp(-t/180)=\exp\left(\frac{\sqrt{n}}{90000r^2}\right).
				\end{eqnarray*}

		\end{proof}

\end{proofof}

\subsection{The reduction}

We now prove \autoref{thm:lb:na:1s:k-vs-far-from-g(k,n)}. As mentioned,   \autoref{thm:lb:na:1s:k-vs-far-from-g(k,n)} and  \autoref{thm:lb:k-vs-g(k,n):chain-tester} immediately imply \autoref{cor:lb:k-vs-g(k,n):general}.

\begin{proof}[Proof  of \autoref{thm:lb:na:1s:k-vs-far-from-g(k,n)}]
We show that given a $q$-query non-adaptive, one-sided $(k, s)$-tester, one can obtain a $\bigO{q^{k+1} n}$-query  $(k, s)$-tester that only queries values on a distribution over  random chains.

Let $T$ be a $q$-query non-adaptive, one-sided $(k,s)$-monotonicity tester.
Therefore, $T$ accepts functions that are $k$-monotone, and rejects functions that are $\eps$-far from being $s$-monotone with probability $2/3$.

Define a tester $T'$ that on input a function $f$ does the following:  it first gets the queries of $T$,  then for each $(k+1)$-tuple  $q_1 \prec q_2 \prec \cdots \prec q_{k+1}$, $T'$ queries an entire uniformly random chain from $0^n$ to $1^n$, conditioned on containing these $k+1$ points.  Therefore, $T'$ is also one-sided, makes $O({q \choose{k+1}} n) = O(q^{k+1}n)$ queries, and its success probability is no less than \footnote{We assume that every query made by $T$ belongs to at least one $(k+1)$-tuple.  Queries that do not are of no help to $T$, since these queries can not be part of a violation to $k$-monotonicity discovered by $T$, and we are assuming that $T$ is non-adaptive and one-sided.} the success probability of $T$.

\newcommand{\bsigma}{\boldsymbol{\sigma}}

Define $T''$ that on input $f$ picks a random permutation $\bsigma\colon [n] \to [n]$ and then applies the queries of $T'$  to the function $f\circ \pi_{\bsigma}$ (where $\pi_{\bsigma}$ is defined as in Definition \ref{def:hard:family}). This means that if $T'$ queries $q$, $T''$ queries $\pi_{\bsigma}(q)$. Then $T''$ ignores what $T'$ answers and only rejects if it finds a violation on any one of the chains.  

Note that if $f$ is $k$-monotone, then so is $f\circ \pi_{\bsigma}$, and if $f$ is $\eps$-far from being $s$-monotone, then so is $f\circ \pi_{\bsigma}$. 

Therefore, $T''$ is one-sided, non-adaptive, and makes $O(q^{k+1}n)$ queries. Since $T'$ is one-sided, it can only reject if it finds a $(k+1)$-tuple forming a  violation to $k$-monotonicity. So if $T'$ rejects, so does $T''$, and it follows that  the success probability of $T''$ is at least the success probability of $T'$, which is at least $2/3$.

We now claim that the queries of $T''$ are distributed as $O({q \choose{k+1}})$ uniformly random chains.  While the marginal distribution for each individual chain is the uniform distribution over chains, the joint distribution over these chains is not necessarily independent. 
Suppose $T'$ queries points $q_1, q_2, \cdots q_{k+1}$ with $q_1 \prec q_2 \prec \cdots \prec q_{k+1}$.
Then $\pi_{\bsigma}(q_i)$ is a uniformly random point on  its weight level, and   $\pi_{\bsigma}(q_1)\prec \pi_{\bsigma}(q_2) \prec \ldots \prec \pi_{\bsigma}(q_{k+1})$. It follows that a chain chosen uniformly at random conditioned on passing through these points is a uniformly random chain in $\{0,1\}^n$.

Let $p$ be the success probability of the basic chain tester that picks a uniformly random chain in $\{0,1\}^n$ and rejects only if it finds a violation to $k$-monotonicity.  Taking a union bound over the chains chosen by $T''$, the success probability of $T''$ is at most $p\cdot{q \choose{k+1}} \leq p \cdot q^{k+1}.$
It follows that $p\cdot q^{k+1}\geq 2/3$, from which it easily follows that $p = \Omega(q^{1/(k+1)})$, concluding the proof.
\end{proof}

\section{Upper Bounds over the Hypergrid}\label{sec:upper-bds}

In this section we  prove Theorem \ref{thm:ub}.
\newcommand{\calB}{\mathcal{B}}
\renewcommand{\dist}[2]{d(#1,#2)}
In this section, we consider Boolean functions over the hypergrid $[n]^d$.  For convenience, we will define $[n] = \{0,1,2,\ldots,n-1\}$. 
Assuming $m$ divides $n$, we define $\calB_{m,n} : [n]^d \to [m]^d$ be the mapping such that $\calB_{m,n}(y)_i = \lfloor y_i/m \rfloor$ for $1 \leq i \leq d$.  For $x \in [m]^d$, we define $\calB^{-1}_{m,n}(x)$ to be the inverse image of $x$ under $\calB_{m,n}$.  Specifically, $\calB^{-1}_{m,n}(x)$ is the set of points of the form $m \cdot x + [n/m]^d$, using the standard definitions of scalar multiplication and coordinatewise addition.  That is, $\calB^{-1}_{m,n}$ is a ``coset'' of $[n/m]^d$ points in $[n]^d$.  We will call these cosets \emph{blocks}, and we will say that $h : [n]^d \to \{0,1\}$ is an $m$-block function if it is constant on $\calB^{-1}_{m,n}(x)$ for every $x \in [m]^d$.  For readability, we will often suppress the dependence on $m$ and $n$.

\begin{claim}[Claim 7.1, \cite{CGGKW:17}]\label{claim:mono-coarsening}
	Suppose $f : [n]^d \to \{0,1\}$ is $k$-monotone.  Then there is an $m$-block function $h : [n]^d \to \{0,1\}$ such that $\dist{f}{h} < kd/m$.
\end{claim}

\begin{claim}\label{claim:block-to-mono}
	Suppose $h : [n]^d \to \{0,1\}$ is an $m$-block function.  Then $h$ is $((m-1)d-1)$-monotone.
\end{claim}

\begin{proof}
	Without loss of generality, we assume that $h(0^d) = 0$.
	Suppose for the sake of contradiction that $h$ is an $m$-block function such that $h$ contains a violation to $((m-1)d-1)$-monotonicity.  Equivalently, there exists $y_0 \prec y_1 \prec \cdots \prec y_{(m-1)d-1}$ in $[n]^d$ such that $h(y_0) = 1$ and $h(y_i) \neq h(y_{i+1})$ for $0 \leq i \leq (m-1)d-2$.  Since $h$ is constant on each block, we have no two $y_i$'s are in the same block.  Thus the set $\{\calB_{m,n}(y_i) : 0 \leq i \leq (m-1)d-1 \}$ contains $(m-1)d$ distinct vectors in $[m]^d$ that can be totally ordered.  This implies that $0^d$ and $(m-1)^d$ (this is a vector of $d$ coordinates, all of which are $m-1$) are in this set.  Clearly, $0^d$ is the ``smallest'' vector in this total order, and it follows from our definition of violation that $h(0^d) = 1$.  This is a contradiction, since we assumed at the outset that $h(0^d) = 0$.  Thus $h$ does not contain a violation to $((m-1)d-1)$-monotonicity, and $h$ is $((m-1)d-1)$-monotone.
\end{proof}

We define the $m$-block-coarsening of a function $f : [n]^d \to \{0,1\}$ to be the $m$-block function $h : [n]^d \to \{0,1\}$ such that $\dist{f}{h}$ is as small as possible.  We would like to use query access to $f$ to get query access to $h$, but it is not guaranteed we can do this.  Rather, for each $x \in [m]^d$, we randomly select a set $\bS_x$ of points in the block $\calB^{-1}(x)$, where we choose $|\bS_x|$ to be large enough such that

\[
\left\lvert \Pr_{\bz \sim \bS_{\calB(y)}}[f(\bz) = 0] - \Pr_{\bz \sim \calB^{-1}(\calB(y))}[f(\bz) = 0] \right\rvert \leq 1/9,
\]
with high probability.  Our target query complexity is independent of $m$, so we can \emph{not} necessarily query points from each $\bS_x$; these points merely allow us to talk about a specific (randomly chosen) function.  We denote by $\bh' : [n]^d \to \{0,1\}$ the $m$-block function such that

\[
\bh'(y) = \mathop{\arg\!\max}_{b \in \{0,1\}} \Pr_{\bz \sim \bS_{\calB(y)}}[f(\bz) = b],
\]
breaking ties arbitrarily. In other words, $\bh'$ can be thought of as the $m$-block function obtained by (randomized) ``greedy'' coarsening of $f$ where we greedily select the majority bit (or the winning bit) over the random sample $\bS_x$ to be the value $h'$ over the corresponding block.

\begin{claim}
We have $\dist{f}{\bh'} \leq \frac54\dist{f}{h}$.
\end{claim}

\begin{proof}
Let $B$ be a block such that $h$ and $\bh'$ disagree.  Then the wrong bit was estimated to be the majority value of $f$ when restricted to $B$.  By our construction of $\bh'$, we must have

\[
\left\lvert \Pr_{\by \sim B}[f(\by) = h(\by)] - \Pr_{\by \sim B}[f(\by) = \bh'(\by)] \right\rvert \leq 1/9,
\]
since exactly one of $h(\by)$ and $\bh'(\by)$ is $0$ and the other is $1$.  It follows that $\Pr_{\by \sim B}[f(\by) \neq h(\by)] \geq 4/9$ and
$\Pr_{\by \sim B}[f(\by) \neq \bh'(\by)] \leq 5/9$.  Combining these inequalities, we get $\Pr_{\by \sim B}[f(\by) \neq \bh'(\by)] \leq \frac54 \Pr_{\by \sim B}[f(\by) \neq h(\by)]$.

Clearly, if $h$ and $\bh'$ do not disagree on $B$, then the previous probabilities are equal and the inequality holds. 
Thus, for all blocks $B$, $\Pr_{\by \sim B}[f(\by) \neq \bh'(\by)] \leq \frac54 \Pr_{\by \sim B}[f(\by) \neq h(\by)]$.  The claim follows by taking the expected value of each side over a uniformly chosen block.


\end{proof}

\begin{proof}[Proof of Theorem~\ref{thm:ub}] In our tester, we set $m = (2kd^2/\eps + 1)/d + 1$, and the tester simply estimates $\dist{f}{\bh'}$ to within $\pm \eps/8$, which can be done with $\widetilde{O}(1/\eps^2)$ queries. Note that query access to $\bh'$ can be simulated because $\bh'$ is the result of the randomized greedy coarsening described earlier. 
By Claim~\ref{claim:mono-coarsening}, if $f$ is $k$-monotone, then there is an $m$-block function $h$ such that

\[
\dist{f}{h} < kd/m = kd/((2kd^2/\eps + 1)/d + 1) < kd^2/(2kd^2/\eps) = \eps/2,
\]
and it follows that $\dist{f}{\bh'} \leq \frac54\dist{f}{h} \leq \frac54(\eps/2) = 5\eps/8$.
By Claim~\ref{claim:block-to-mono}, if $f$ is far from $2kd^2/\eps$-monotone, then it is $\eps$-far from every $m$-block function, as for our setting of $m$, we have $((m-1)d-1) = 2kd^2/\eps$.
It follows that $\dist{f}{\bh'} \geq \eps$.  Thus, the tester correctly accepts if the estimate of $\dist{f}{\bh'}$ is at most $3\eps/4$ and correctly rejects if this estimate is at least $7\eps/8$.

\end{proof}

\paragraph{Acknowledgments.} We thank  our collaborators Cl\'ement Canonne and Siyao Guo, who have gracefully refused to co-author this paper.
\bibliographystyle{alpha}
\bibliography{references} 

\end{document}